\documentclass[12pt]{amsart}

\textheight22.5cm
\textwidth16cm
\addtolength{\topmargin}{-20pt}
\evensidemargin-0.5cm
\oddsidemargin-0.5cm

\newtheorem{teo}{Theorem}[section]
\newtheorem{lem}[teo]{Lemma}

\newtheorem{cor}[teo]{Corollary}

\newtheorem{ex}[teo]{Example}

%\def\Wp{{\stackrel{\circ}{W}{}^1_2}}
%\def\d{{\partial}}
%\def\v{\vert}
%\def\e{\varepsilon}
%\def\f{\varphi}
%\def\V{\Vert}

% ======================================

\def\supp{\mathop{\rm supp}}

\def\la{\lambda}

\def\f{\varphi}

\def\Wp{{\stackrel{\circ}{W}{}^1_2}}

\def\te{\theta}

\def\ss{\subset}

\def\a{\alpha}

\def\G{\Gamma}
\def\d{{\partial}}

\def\R{\mbox{\bf R}}

\def\C{\mbox{\bf C}}

\def\cal{\mathcal}

\title[Behavior  as $t\rightarrow \infty$ of solutions]
{Behavior as $t\rightarrow \infty$ of Solutions of
a problem in Mathematical Physics}

\author{S.\,D. Troitskaya}
\thanks{The research was partially supported
by a grant for supporting of scientific potential
of the higher school No.PNP.2.1.1.5031.}
\address{Institute of Content and Methods of Education of
the Russian Academy of Education (ISMO RAO),
Pogodinskaya ul., 8,
 Moscow 119435 Russia}
\email{troitsks@gmail.com}

%\date{}

\begin{document}

\begin{abstract}
A class of solutions, decaying  as $t\rightarrow \infty$,
 of a two-dimensional model problem on the
oscillations of an ideal rotating fluid in some
domains with angular points is constructed  explicitly. The existence of solutions whose
 $L_2$-norms decrease more rapidly than  any
negative power of $t$, is established.

\end{abstract}

\maketitle

%%%%%%%%%%%%%%%%%%%%%%%%%%%%%%%%%%%%%%%

\section*{Introduction}
In the paper, the first initial-boundary problem
for the Poincar\'e--Sobolev equation is considered,
\begin{equation}\label{1}
\frac {\d^2 }{\d t^2}\left( \frac {\d^2 p}{\d x^2}+
\frac {\d^2 p}{\d y^2} \right)+\frac {\d^2 p}{\d y^2}=0,\quad
(x,y;t)\in D\times (0,\infty),\,\,D\subset \R^2,
\end{equation}
\begin{equation}\label{2}
p|_{\d D\times (0,\infty)}=0,
\end{equation}
\begin{equation}\label{3}
p|_{t=0}=p_0(x,y),\quad p_t|_{t=0}=p_1(x,y),\quad
(p_0|_{\d D}=0, p_1|_{\d D}=0),
 \end{equation}
where $D$ is a bounded domain with piecewise smooth boundary and $D$
satisfies the cone condition.
Let functions  $p_0$ and $p_1$ belong  to the Sobolev space
$\Wp (D)$, which is a completion of the set
$ C_0^{\infty}(D)$ of infinitely  differentiable functions whose
supports are contained in $D$ with respect to the norm generated by
the inner product
 \begin{equation}\label{4}
(f,g)_{1}=\int\!\!\!\int\limits_D\left(
\nabla f,\nabla \overline {g}\right) dx\,dy.
 \end{equation}
By a \emph{generalized solution} of the problem
(in the sense of the theory of distributions, or the so-called generalized functions)
we mean the functions $p(x,y;t)$ with values in $\Wp (D)$ that are
   twice continuously differentiable with respect to $t$ and satisfy condition (\ref{3})
and the relation
\begin{equation}\label{5}
\int\limits_D(p_{ttx}\f_x+p_{tty}\f_y+p_y\f_y) dD=0 \quad
\forall t\in (0,\infty)
\end{equation}
keeps for any function  $\f \in C_0^{\infty}(D)$.
If the function $p$ is sufficiently smooth here, then
this generalized solution  is a classical solution.

This problem arises in hydrodynamics when describing
 small oscillations of a rotating ideal
 fluid in the following model two-dimensional  case, i.e., under
the assumption that the components of  velocity and
the pressure of the fluid depend only on time and on two spatial
variables, and the domain (the vessel) filled by the fluid is the
cylinder $ Q=\{(x,y,z)|(x,y)\in D, z\in \R\}$.
A linearized system of equations
describing the dynamics of a rotating fluid was first considered
by H.~Poincar\'e in \cite{Poin85}.
S.~L.~Sobolev in his well-known paper \cite{Sob54}
initiated the investigation of qualitative properties
of solutions of this system, and also of related
equations of type (\ref{1}) in diverse domains.
To study problem (\ref{1})--(\ref{3}),
it is natural to introduce the following operator $A$
acting on the Sobolev space $\Wp (D)$. The operator $A$ is
defined on any smooth function $h\in C_0^{\infty}(D)$ as
the solution of the problem
 \begin{equation}\label{6}
\Delta (A\,h)=\frac {\d^2 h}{\d y^2},\quad A\,h\in \Wp (D),
 \end{equation}
and then  is extended by continuity  to a bounded
operator on $\Wp (D)$ which is selfadjoint with respect to
the inner product  (\ref{4}). Using  the operator $A$, we can represent
 problem (\ref{1})--(\ref{3})  as an abstract Cauchy problem:
 \begin{equation}\label{7}
{\Psi}''=-A\Psi, \quad \Psi (0)=\Psi_0,\, {\Psi}'(0)=\Psi_1,
 \end{equation}
where $t\mapsto \Psi(x,y,t)$ stands for a function with values
in $\Wp (D)$, and  $\Psi_0,\, \Psi_1\in\Wp (D)$ (see, for example, \cite{Zel70}).
One of the main problems arising in the study
of  problem (\ref{7}) is that on the behavior of the solutions of (\ref{7}) as
$t\to \infty$, which  is  closely related to the structure of the spectrum
of the operator $A$, as is well known.

As is also well known,  for any domain $D$,  the spectrum of
the operator $A$ is the closed interval
$[0, 1]$; however,  the qualitative structure of the  spectrum  substantially  depends  on
the shape of the domain.
The investigation of the spectral properties of the operator $A$
was initiated in  \cite{Alex49}
and  continued by many other authors.
A  rich bibliography concerning the problem and its surrounding
can be found in \cite{ABIK70}--\cite{Fok94}.
The structure  of the spectrum of
the operator $A$ is known completely in two cases only, namely, if $\d D$ is an ellipse
or if it is a rectangle, provided that these figures are symmetric with respect to the axis $Oy$.
In these cases the spectrum of the operator $A$ is purely point,
i.e. $A$ admits a complete system of eigenfunctions.
Therefore, in this domains all solutions of  problem (\ref{7})
are almost periodic in time. At the same time
(see \cite{Alex60}), arbitrary small modifications of
the boundary $\d D$ can result in the occurrence of a continuous
spectrum for the operator $A$, whereas, for some deformations of the boundary of the domain,
a singular component of the spectrum can appear
(see \cite{Fok94}).

The present paper is devoted to the study of the problem (\ref{7})
for the case in which $D$ is a triangle,\footnote{The results of the
paper can be extended (with the corresponding modifications)
to a rather large class of domains with angular points
(see the remark at the end of the paper). However, in the main text,
 we deliberately avoid any generalizations to
maximally simplify the presentation of the material.}
 \begin{equation}\label{8}
D:=\{(x,y)\;|\;0<x<\frac 1{\alpha},\, 0<y<\alpha x\},\quad
(0<\alpha <+\infty ).
 \end{equation}
It follows from the papers \cite{Tro94,Tro99} of the author
that the spectrum of the operator $A$  is purely
continuous in this case. Therefore, the forthcoming investigation
of the behavior of solutions of  problem (\ref{7}), which are
no almost periodic functions of $t$, and, in particular, the existence problem
for solutions decaying in time,
is closely related to the
properties of differential solutions of the spectral
equation for the operator $A$.
In Section \ref{sec:pervaja} of the present paper we explicitly construct
a class of  differential solutions of this kind, and also prove
that the spectrum of the operator $A$ is absolutely continuous on the subspace which is the
closure of the linear span of these differential solutions.
This enables us to explicitly  construct  a class of
solutions of problem (\ref{7}) whose
$L_2$-norm decays as $t\to \infty$ (see Section \ref{sec:vtoraja}).
Moreover, it turns out that some solutions of
this class  tend to zero more rapidly than any negative
power of $t$, and, in the course of time,  the entire ``energy'' of these solutions
turns out to be  concentrated in an
arbitrarily small neighborhood of one of the vertices
of the triangle $D$.
On one hand, this result agrees with the possible behavior of solutions of the problem
under consideration in a neighborhood
of the angular points, in the form predicted in
\cite{Gre68}, whereas, on the
 other hand, it suggests the idea that, in the investigation
 of the motions of an actual rotating fluid in a vessel   with such a
 boundary, for large values of $t$,
one should consider the corresponding nonlinear systems.

Some results of the present paper were announced in
the papers
\cite{Tro10Postr}, \cite{Tro10Svoistva}.

\section{Absolutely continuous spectrum
of the operator $A$ for the domain $D$}\label{sec:pervaja}

1. Let $D$ be of the form (\ref{8}). Consider
the spectral problem for the operator $A$,
\begin{equation}\label{9}
A\,u=\lambda\,u.
\end{equation}
It can readily be seen that
$\lambda \in (0,1)$ is an eigenvalue of the operator $A$
if and only if the hyperbolic equation
\begin{equation}\label{10}
\frac {\d^2 u}{\d x^2}-\frac 1{a^2} \frac {\d^2 u}{\d
y^2}=0, \quad a^2=\frac{\la}{1-\la},
\end{equation}
has at least one nontrivial generalized solution
$u\in \Wp (D)$. It follows from the results of
\cite{Tro94,Tro99}  that there are no
solutions of this kind
for the domain $D$ under  consideration,
and the spectrum of the
operator $A$ is purely continuous,
$\sigma (A)=\sigma_c (A)$.
This means that, if
$E_{\lambda}$ is the spectral function   of the operator $A$ and
 ${h}$  is an arbitrary element of the space $\Wp (D)$, than the function
$(E_{\lambda}{h},{h})_1$ is continuous on $[0,1]$.
In this section we explicitly construct
a function $E_{\lambda}{h}$ for the elements ${h}$ in some
$A$-invariant subspace and prove that the
corresponding functions
$(E_{\lambda}{h},{h})_1$ are absolutely continuous.

As is well known,  for any ${h}$
and any interval $[\lambda_1,\lambda_2]\subset [0,\,1]$, the function
$U(\lambda):=E_{\lambda}{h}$
satisfies the equation
\begin{equation}\label{11}
A\left(U(\lambda_2)-U(\lambda_1)\right)=\int\limits_
{\lambda_1}^{\lambda_2}\lambda\,d U(\lambda),
\end{equation}
and, conversely, every function $U(\lambda):[0,\,1] \to \Wp (D)$,
 continuously depending  on the parameter $\lambda$,
and satisfying equation (\ref{11}) for any $\lambda_1,\,\lambda_2\in  [0,\,1]$
and the condition $U(0)=\bf{0}$,
where $\bf{0}$ stands for the zero element of the space $\Wp (D)$, is
necessarily of the  form
$U(\lambda)=E_{\lambda}{h}$, ${h}\in \Wp (D)$
(see, for example, \cite{SmirV}).
Every function $U(\lambda)$ of this kind is referred to as a \emph{differential
solution} of  equation (\ref{9}).
In the next subsection, we construct a certain class of
differential solutions of the spectral equation for
the operator $A$.

2. Consider the hyperbolic equation (\ref{10}) in the domain
$D$. We refer to the characteristics
of the form $x= a\,y+c'$ of (\ref{10}) as the characteristics of the
first family  and those of the form $x= -a\,y+c''$
as the characteristics of the
second family. Choose  an arbitrary value $\lambda \in (0,(1+\alpha^2)^{-1})$.
Introduce  the following polygonal lines by
introducing the following reflection
law of the rays of the characteristic directions
at the boundary.
Suppose that a ray issuing from the point $B(1/{\alpha},1)$
in the direction of the first family of characteristics
 inside the domain $D$ meets the  boundary $\d D$ and goes inside $D$, after
the reflection at the boundary, already in the
direction of the  characteristics of the other family, after which, this ray meets the
 boundary $\d D$ again, makes a reflection at $\d D$, and goes along
 a characteristics of the first family, and so on.
 We also assume that the ray  issued from the point $A(1/{\alpha},0)$
 in the direction of the second family of characteristics inside the domain $D$
 admits a similar behavior under the reflection at $\d D$.
Note that, for the values of
$\lambda \in (0,(1+\alpha^2)^{-1})$
under consideration, both the rays ``hide'' into the
 the angle with the vertex $O(0,0)$. The polygonal lines that are
trajectories of these rays divide $D$ into
infinitely many triangles
$T_{\lambda}^i$ and parallelograms $P_{\lambda}^i$.
Denote the vertices of these polygonal lines
belonging to the segment  $[OB]$
by $B_{\lambda}^i$, those on the segment $[OA]$  by
$A_{\lambda}^i$, and the interior  points of their intersections by $C_{\lambda}^i$, $i=1,2,..$,
and index them in the ascending order in the direction from the segment
$[AB]$ to the point $O(0,0)$.

Let $\te_1  \in L_2(0,1)$ be an arbitrary function.

As is well known (see \cite{Tro94}), every generalized solution
$u_0(x,y,\lambda )$ of (\ref{10}) on
 the characteristic triangle  $ABC_{\lambda}^1$ such that
$u_0(x,y,\lambda )\in W^1_2(ABC_{\lambda}^1)$
coincides on this triangle almost everywhere  with
a function continuous on $\overline{ABC_{\lambda}^1}$
and such that the generalized derivative of this function
with respect to the variable $x$ (treated
   in the sense of distribution theory)  has a trace on the segment $[AB]$, and this trace
belongs to the space $L_2(0,1)$.
Therefore, one can pose the following problem:
among all solutions of (\ref{10}), find solutions that satisfy the
conditions
$$
u_0|_{AB}=0,\quad \left.\frac {\d u_0}{\d x}\right|_{AB}=\te_1.
$$
It follows from \cite{Tro98}  that such a solution $u_0$
exists and is unique in the triangle  $ABC_{\lambda}^1$.
Moreover, the trace of this solution
on the segment  $[BC_{\lambda}^1]$ belongs to
$W^1_2(BC_{\lambda}^1)$
and satisfies the following bound:
\begin{equation}\label{11m}
\left\|u_{0}|_{BC_{\lambda}^1}\right\|_{ W^1_2(BC_{\lambda}^1)}\leq
C_{0,\varepsilon}\|\te_1\|_{L_2},\quad
\end{equation}
where the constant $C_{0,\varepsilon} $ does not depend on $\te_1 $.

Further, using the function $u_0$, we can construct a generalized solution
$u_1(x,y,\lambda )\in W^1_2(BC_{\lambda}^1B_{\lambda}^1)$
of  (\ref{10}) in the triangle $BC_{\lambda}^1 B_{\lambda}^1$  such that
$$
u_1|_{BC_{\lambda}^1}=u_0|_{BC_{\lambda}^1},
\quad  u_1|_{BB_{\lambda}^1}=0.
$$
The problem of finding such a solution is a generalization
of the classical Darboux problem in which a solution of
a hyperbolic equation is uniquely determined by the
values of this solution on two curves issuing from a given point.
In the present case,  one of the curves
is a characteristic of the equation.
In  \cite{Tro99}, this problem is studied both for the regular and for the generalized
solutions. It follows from the results of  \cite{Tro99} that the desired
function $u_1$  exists and is unique and, moreover,
the trace of $u_1$ on the segment  $[C_{\lambda}^1 B_{\lambda}^1]$
belongs to the space
$W^1_2( C_{\lambda}^1 B_{\lambda}^1)$
and satisfies a bound similar  to (\ref{11m}).

After this, using the functions $u_0$ and $u_1$,
we construct a generalized solution
$u_2(x,y,\lambda)\in W^1_2(AB_{\lambda}^1 A_{\lambda}^2)$
of (\ref{10}) on the triangle
$AB_{\lambda}^1A_{\lambda}^2$  such that
$$
u_2|_{AC_{\lambda}^1}=u_0|_{AC_{\lambda}^1},\quad
u_2|_{C_{\lambda}^1B_{\lambda}^1}=
u_1|_{C_{\lambda}^1B_{\lambda}^1},\quad  u_2|_{AA_{\lambda}^2}=0.
$$
This process can be continued. Let
  \begin{equation} \label{12}
u(x,y;\te_1;\lambda):=\left\{
\begin{array}{ll}
u_0(x,y,\lambda ) &\mbox{ for } (x,y)\in \Delta ABC_{\lambda}^1,\\
u_1(x,y,\lambda ) &\mbox{ for } (x,y)\in \Delta BC_{\lambda}^1B_{\lambda}^1,\\
u_2(x,y,\lambda ) &\mbox{ for } (x,y)\in \Delta AB_{\lambda}^1A_{\lambda}^2, \\
\dots       &   \dots
\end{array}
\right.
 \end{equation}
It can readily be proved that, for any $\te_1 \in L_2(0, 1)$
and for any $\lambda \in (0,(1+\alpha^2)^{-1})$,
the function $u(x,y;\te_1;\lambda)$ thus constructed has
 the following properties:

a) the function $u(x,y;\te_1;\lambda)$ belongs to $L_2(D)$
and is a generalized solution of  (\ref{10});

b) for any $0<\varepsilon <1$, the function
$u(x,y;\te_1;\lambda)$ belongs to the space
$W^1_2(D\cap \{x>\varepsilon\})$;

c) for any smooth curve
$\vec{r}(t):=\{x(t),\,y(t),\,\lambda(t)\},\,t\in [t_0,\,t_1]$ lying
in the domain
$D\times (0,\,(1+\alpha^2)^{-1})$, the function
$u(x(t),y(t);\te_1;\lambda(t))$ is absolutely continuous on $[t_0,t_1]$;

d) for almost all $x\in (0,\,1/\alpha)$, the derivatives
$ u'_x(x,\alpha x)$, $u'_y(x,\alpha x)$, $u'_x(x,0)$, and $u'_y(x,0)$
are well defined and, for any $0<\varepsilon <\frac 1{\alpha} $,
there is a constant $C_{\varepsilon}$ independent
of $\te_1$ and such that
$$
\left\|u'_x(x,\alpha x)\right\|_{L_2(\varepsilon,\,\alpha)}\leq
C_{\varepsilon}\|\te_1\|_{L_2},\quad
\left\|u'_y(x,\alpha x)\right\|_{L_2(\varepsilon,\,\alpha)}\leq
C_{\varepsilon}\|\te_1\|_{L_2},\quad
$$
$$
\left\|u'_x(x,0)\right\|_{L_2(\varepsilon,\,\alpha)}\leq
C_{\varepsilon}\|\te_1\|_{L_2},\quad
\left\|u'_x(x,0)\right\|_{L_2(\varepsilon,\,\alpha)}\leq
C_{\varepsilon}\|\te_1\|_{L_2}.
$$

Let
$0<\lambda_*< \lambda_{**}<(1+\alpha^2)^{-1}$,
and let
$\sigma(\mu)\in C^1[\lambda_{*},\lambda_{**}]$ be a function. Introduce
 the  function:
 \begin{equation} \label{13}
U(x,y;\te_1;\sigma;\lambda):=\left\{
\begin{array}{ll}
\bf{0},       &\mbox{ for } \lambda\leq \lambda_*,\\
\int\limits_{\lambda_*}^{\lambda}\sigma (\mu)u(x,y;\te_1;\mu)d\mu,
&\mbox{ for }  \lambda_*<\lambda\leq\lambda_{**} \\
\int\limits_{\lambda_*}^{\lambda_{**}}\sigma (\mu)u(x,y;\te_1;\mu)d\mu,
&\mbox{ for }  \lambda_{**}<\lambda , \\
\end{array}
\right.
 \end{equation}
where $\bf{0}$ stands for the zero element of the space $\Wp (D)$.

\begin{teo}\label{teo:odinn}
The function
$U(x,y;\te_1;\sigma;\lambda)$
is a differential solution of  $(\ref{9})$.
 \end{teo}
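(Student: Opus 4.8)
The plan is to verify that $U(\lambda):=U(x,y;\te_1;\sigma;\lambda)$ satisfies the three defining conditions of a differential solution recalled before the theorem: (i) $U(\lambda)$ takes values in $\Wp(D)$, (ii) $\lambda\mapsto U(\lambda)$ is continuous, and (iii) $U(0)=\mathbf{0}$ together with the integral identity $A\bigl(U(\lambda_2)-U(\lambda_1)\bigr)=\int_{\lambda_1}^{\lambda_2}\mu\,dU(\mu)$ for all $\lambda_1,\lambda_2\in[0,1]$. The condition $U(0)=\mathbf{0}$ is immediate from the first branch of (\ref{13}). For (i) and (ii), note that on $[\lambda_*,\lambda_{**}]$ the map $\mu\mapsto\sigma(\mu)u(x,y;\te_1;\mu)$ is, by properties (b)--(d) of $u$, a norm-bounded and (thanks to the absolute continuity in (c) applied along segments $\lambda\mapsto(x,y,\lambda)$) continuous family in $W^1_2(D\cap\{x>\e\})$ for every $\e>0$, with $L_2(D)$ control up to the vertex $O$; hence the Bochner-type integral $\int_{\lambda_*}^{\lambda}\sigma(\mu)u(\cdot;\mu)\,d\mu$ defines an element of $\Wp(D)$ depending continuously on $\lambda$, and it is constant for $\lambda\ge\lambda_{**}$. (Here one uses the uniform bounds $C_\e\|\te_1\|_{L_2}$ from (d) to see the integral lands in $\Wp(D)$ and not merely in $L_2$.)

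The heart of the matter is the integral identity in (iii). The strategy is: since $U(\lambda)$ is piecewise defined, it suffices to check the identity for $\lambda_1,\lambda_2\in[\lambda_*,\lambda_{**}]$ (the other cases reduce to this one because $U$ is constant outside, and $dU=0$ there). For such $\lambda_1<\lambda_2$ we have $U(\lambda_2)-U(\lambda_1)=\int_{\lambda_1}^{\lambda_2}\sigma(\mu)u(\cdot;\mu)\,d\mu$ and $dU(\mu)=\sigma(\mu)u(\cdot;\mu)\,d\mu$, so the claimed identity becomes
\begin{equation}\label{key}
A\int_{\lambda_1}^{\lambda_2}\sigma(\mu)u(\cdot;\mu)\,d\mu=\int_{\lambda_1}^{\lambda_2}\mu\,\sigma(\mu)u(\cdot;\mu)\,d\mu.
\end{equation}
Since $A$ is bounded on $\Wp(D)$, it commutes with the integral, so (\ref{key}) follows once we show $A\,u(\cdot;\te_1;\mu)=\mu\,u(\cdot;\te_1;\mu)$ in $\Wp(D)$ for each fixed $\mu\in(0,(1+\alpha^2)^{-1})$ — i.e. that $u$ is a (generalized) eigenfunction. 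But $u$ is \emph{not} in $\Wp(D)$ (only in $W^1_2(D\cap\{x>\e\})$), so this equation cannot hold literally; instead the correct reading is that (\ref{key}), with the averaging in $\mu$, does land in $\Wp(D)$ and does hold. The way to establish it is to test against $\f\in C_0^\infty(D)$: using the definition (\ref{6}) of $A$ one rewrites $\bigl(A\int\sigma u\,d\mu,\f\bigr)_1=\int_D\bigl(\int_{\lambda_1}^{\lambda_2}\sigma(\mu)u(\cdot;\mu)\,d\mu\bigr)_{yy}\,\f\,dD$ (interpreting $\d_y^2$ via integration by parts), and one must show this equals $\int_D\int_{\lambda_1}^{\lambda_2}\mu\sigma(\mu)\bigl(u_{xx}(\cdot;\mu)-a^{-2}(\mu)u_{yy}(\cdot;\mu)\cdot 0\bigr)$-type expressions, reorganized so that the hyperbolic equation (\ref{10}), $u_{xx}=a^{-2}u_{yy}$ with $a^2=\mu/(1-\mu)$, can be invoked; this gives the factor $\mu=a^2/(1+a^2)$.

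The main obstacle, and where real work is needed, is justifying all the integrations by parts across the infinitely many characteristic triangles $T_\mu^i$ and parallelograms $P_\mu^i$: $u(\cdot;\mu)$ is only piecewise $W^1_2$, matching continuously across the characteristic polygonal lines by construction, and the boundary terms on $\d D$ vanish because $u|_{\d D}=0$ was built in (the conditions $u_0|_{AB}=0$, $u_1|_{BB_\lambda^1}=0$, etc.), but one must control the sum of the triangle-by-triangle contributions near the vertex $O$, where the $W^1_2$ norm degenerates. This is handled by the uniform estimates (\ref{11m}) and property (d): cutting off to $D\cap\{x>\e\}$, performing the (now finite) integration by parts there, and letting $\e\to0$ using the $L_2(D)$ membership of $u$ from (a) to kill the remainder. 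A further technical point is interchanging the $\mu$-integral with the $x,y$-integration and with the $\d_y$ derivatives — legitimate by the bounds in (d) and Fubini — and checking that the continuity in $\mu$ from property (c) makes $\mu\mapsto u(\cdot;\mu)$ strongly measurable into $L_2(D)$ so that the Bochner integrals make sense. Once these justifications are in place, (\ref{key}) follows and with it the theorem.
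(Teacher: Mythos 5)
Your verification framework (check $U(0)=\bf{0}$, continuity, and the identity (\ref{11})) is the right one, and your treatment of the identity itself is essentially the paper's: test against $g\in C_0^{\infty}(D)$, interchange the $\mu$- and $(x,y)$-integrations, and invoke the fact that each $u(\cdot;\te_1;\mu)$ is a generalized solution of (\ref{10}). In fact you overcomplicate that step: since property a) already asserts that $u(\cdot;\te_1;\mu)\in L_2(D)$ solves (\ref{10}) distributionally on all of $D$, the relation $\int_D u\,g_{yy}\,dD=\mu\int_D u\,\Delta g\,dD$ is available at once, and no triangle-by-triangle integration by parts across the characteristic polygonal lines, nor any limit $\e\to 0$, is needed.

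The genuine gap is at the step you dispatch in one sentence: the claim that $V=\int_{\lambda_1}^{\lambda_2}\sigma(\mu)u(\cdot;\te_1;\mu)\,d\mu$ belongs to $\Wp(D)$. You justify this by the bounds $C_{\e}\|\te_1\|_{L_2}$ of property d), but those constants depend on $\e$ and degenerate as $\e\to 0$; they only give membership in $W^1_2(D\cap\{x>\e\})$ for each fixed $\e$, which is already true of the single function $u(\cdot;\te_1;\mu)$ — and that function is \emph{not} in $\Wp(D)$, as you yourself observe (otherwise $\mu$ would be an eigenvalue of $A$, contradicting the purely continuous spectrum). Membership of the $\mu$-average in $\Wp(D)$ is a cancellation phenomenon: $\nabla u(\cdot;\mu)$ grows like $l_1^{k}$ on the strips $R_k$ accumulating at the vertex $O$, but it oscillates in $\mu$, and only the integration in $\mu$ tames the growth. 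The paper isolates this as Lemma \ref{lem:odin}, and its proof is the bulk of the argument: represent $u$ near $O$ by the Riemann formula (\ref{13,5}), reduce to piecewise-constant data $\te_{1,n}$, pass to the variable $l=l(\mu)$, and integrate by parts in $l$ on each $R_k$ to gain a factor $1/k'$, combined with the count $N(x,y;k)<\gamma k$ of characteristic reflections; this yields $\|V\|_1\leq M(\lambda_2-\lambda_1)\|\te_1\|_{L_2}$. Without this lemma $U(\lambda)$ is not a well-defined $\Wp(D)$-valued function, so the identity (\ref{11}) cannot even be posed (the operator $A$ acts on $\Wp(D)$); moreover the \emph{linear} dependence on $\lambda_2-\lambda_1$ (which a crude Cauchy--Schwarz in $\mu$ could never give, since $\|u(\cdot;\mu)\|_1=\infty$) is exactly what later yields the absolute continuity of $(E_{\lambda}h,h)_1$.
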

 We present the proof of the theorem under
several headings.

\begin{lem}\label{lem:odin}
For any  $\lambda_1$, $\lambda_2$ such that
$\lambda_*\leq \lambda_1<\lambda_2\leq \lambda_{**}$,
the function
\begin{equation}\label{V}
V(x,y;\te_1;\sigma):=\int\limits_{\lambda_1}^{\lambda_2}
\sigma(\mu)
u(x,y;\te_1;\mu) d\mu
\end{equation}
belongs to the space $\Wp (D)$ and satisfies the bound
\begin{equation}\label{ocV}
\|V(x,y;\te_1;\sigma)\|_1\leq M\,(\lambda_2-\lambda_1)\|\te_1\|_{L_2},
\end{equation}
where $M$ does not depend on  $\lambda_1,\,\lambda_2,\,\te_1$.
 \end{lem}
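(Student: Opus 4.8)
The plan is to prove the two assertions of the lemma in turn --- that $\nabla V\in L_2(D)$ with the bound (\ref{ocV}), and that $V$ has vanishing trace on $\partial D$ --- which together force $V\in\Wp(D)$. On $D_\varepsilon:=D\cap\{x>\varepsilon\}$, property b) gives $u(\cdot;\mu)\in W^1_2(D_\varepsilon)$, and the uniform bounds (\ref{11m}) and d), together with the absolute continuity in $\mu$ from c), justify differentiating (\ref{V}) under the integral sign; thus $V\in W^1_2(D_\varepsilon)$ with $\nabla V=\int_{\lambda_1}^{\lambda_2}\sigma(\mu)\nabla u(\cdot;\mu)\,d\mu$ on $D_\varepsilon$, and $V$ has zero trace on $\partial D\cap D_\varepsilon$, since each piece $u_i$ carries zero Dirichlet data on its part of $\partial D$ ($u_0$ on $[AB]$, $u_1$ on $[BB_\lambda^1]\subset[OB]$, $u_2$ on $[AA_\lambda^2]\subset[OA]$, and so on, which together exhaust $\partial D$). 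Granting the global bound $\|\nabla V\|_{L_2(D)}\le M(\lambda_2-\lambda_1)\|\theta_1\|_{L_2}$ proved below, $V\in W^1_2(D)$; its trace on $\partial D$ then vanishes a.e.\ (it vanishes off every neighbourhood of the vertex $O(0,0)$, and $\{O\}$ is a null set of $\partial D$), and since $D$ is a Lipschitz domain this gives $V\in\Wp(D)$.

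It remains to bound $\|\nabla V\|_{L_2(D)}$. Fix $\beta\in(0,1)$ and set $\mathcal D_k:=D\cap\{\beta^{k+1}/\alpha<x\le\beta^k/\alpha\}$, so that $D\setminus\{O\}=\bigsqcup_{k\ge0}\mathcal D_k$. Because $\lambda_{**}<(1+\alpha^2)^{-1}$, for $\mu\in[\lambda_*,\lambda_{**}]$ one has $1-\alpha a(\mu)>0$ with $a(\mu)=\sqrt{\mu/(1-\mu)}$, so the reflection of characteristics at the side $OB$ multiplies the characteristic parameter by a factor $q(\mu)$ lying in a compact subinterval of $(0,1)$, uniformly in $\mu$; hence the billiard cells of the $i$-th generation have $x$-extent of order $q(\mu)^{i}$, and each layer $\mathcal D_k$ meets only a bounded number (uniform in $\mu$) of the cells $T^i_\mu,P^i_\mu$. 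It therefore suffices to establish a per-layer estimate
\[
\|\nabla V\|_{L_2(\mathcal D_k)}\le\delta_k\,(\lambda_2-\lambda_1)\,\|\theta_1\|_{L_2},\qquad \sum_{k\ge0}\delta_k^2<\infty,
\]
and then sum.

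The per-layer estimate is the crux. On $\mathcal D_k$ one unwinds the reflection recursion of the construction: in the characteristic coordinates $\xi=x-a(\mu)y$, $\eta=x+a(\mu)y$ each $u_i$ is a sum $F_i(\xi;\mu)+G_i(\eta;\mu)$, and the Darboux and reflection conditions express $F_i,G_i$ on the generation of index $i$ (which is of order $k$ on $\mathcal D_k$) as the Cauchy datum on $[AB]$ --- essentially $\tfrac12\theta_1$ --- precomposed with the affine rescaling accumulating the roughly $k$ factors $q(\mu)$ met along the path from $[AB]$ to $\mathcal D_k$; by (\ref{11m}) and d) all traces entering the recursion are controlled in $W^1_2$ uniformly in $\mu$ and in $\varepsilon$. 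For fixed $\mu$ this rescaling merely relocates a copy of $\theta_1$ into a cell of size comparable to $\beta^k$, so $\|\nabla u(\cdot;\mu)\|_{L_2(\mathcal D_k)}\le C\|\theta_1\|_{L_2}$ is only bounded, not decaying, in $k$ --- which is exactly why $\nabla u(\cdot;\mu)\notin L_2(D)$. The decay comes from the $\mu$-dependence: the cumulative rescaling is $q(\mu)^{\pm k}$ up to bounded factors, so on $\mathcal D_k$ the argument at which $\theta_1$ is read off depends on $\mu$ with derivative of order $k$. Writing the integrand in the variable in which this rescaling becomes affine and integrating by parts once in $\mu$ --- admissible since $\sigma\in C^1$ and since the $\mu$-derivative falls on the rescaling, never on $\theta_1\in L_2$, which has no spare smoothness --- one gains a factor of order $1/k$ from the Jacobian $|\partial_\mu(\mathrm{argument})|$, which is of order $k$; a Fubini argument over $(x,y)\in\mathcal D_k$, using that the subintervals of $(0,1)$ swept by the $\mu$-integration overlap boundedly as $(x,y)$ ranges over $\mathcal D_k$, then converts the pointwise bound into the displayed $L_2(\mathcal D_k)$ estimate with $\delta_k$ of order $1/k$, while keeping one factor $(\lambda_2-\lambda_1)$.

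Everything outside the per-layer estimate is routine; that estimate is the real work, and I expect it to be the main obstacle. One must track how the composition of the roughly $k$ $\mu$-dependent affine reflection maps depends on $\mu$, verify it is $C^1$ in $\mu$ with derivatives of order $k$ on $\mathcal D_k$, organise the integration by parts so that no derivative ever lands on $\theta_1$, and carry the Fubini bookkeeping that keeps the constant uniform in $\mu$, in $\varepsilon$, and in $k$; the hypotheses $\sigma\in C^1$ and $\lambda_{**}<(1+\alpha^2)^{-1}$ enter precisely here.
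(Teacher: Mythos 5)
Your overall strategy coincides with the paper's: represent $u$ near the vertex $O$ by the Riemann formula, decompose $D$ into geometrically shrinking layers toward $O$ (the paper's $R_k$), gain a factor $1/k$ on the $k$-th layer by integrating by parts in the spectral parameter (exploiting that the argument at which the datum $\te_1$ is read off depends on $\mu$ with derivative of order $k$), and sum $\sum 1/k^2$. You also correctly flag the per-layer estimate as the crux. The difficulty is that your proposal stops exactly there: the bound $\delta_k\sim 1/k$ together with the factor $(\lambda_2-\lambda_1)$ is asserted, and the one-sentence mechanism offered for it does not close, for two concrete reasons.

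First, for $\te_1$ merely in $L_2$ the function $l\mapsto\Phi(\widetilde P(x,y;l),l)$ is not absolutely continuous, so an integration by parts "in which no derivative lands on $\te_1$" is not directly available. The paper first replaces $\te_1$ by a piecewise constant $\te_{1,n}$ as in (\ref{17}), for which $\Phi$ has the explicit form (\ref{Phi}), taking the values $\pm c_i l^{k'}$ on explicit cells; the integration by parts is then literal, and the resulting constant is independent of $n$ only because of a final Cauchy--Schwarz step $\bigl(\sum_i|c_i|\bigr)^2\le n\sum_i|c_i|^2$ converting the per-$c_i$ bounds into $\|\te_{1,n}\|_{L_2}^2$. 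Second --- and this is the essential missing idea --- the integration by parts must be carried out separately on each of the $N(x,y;k)\sim\gamma k$ subintervals of $[l_1,l_2]$ on which $\Phi(\widetilde P(x,y;l),l)$ is a single power $\pm c_i l^{k'}$, and it generates boundary terms $\frac1{k'}\,t(x,y;l)\,l^{k'}$ at each of the $\sim\gamma k$ branch points. Each such term has size of order $l_1^k/k$, so their naive sum is $O(l_1^k)$; squaring and integrating over $R_k$, whose area is of order $l_1^{-2k}$, gives $O(1)$ per layer, and the sum over $k$ diverges (it also carries a spurious factor of $n$). What rescues the argument in the paper is a cancellation absent from your sketch: the boundary terms are paired via the functions $G_i$, $K$, $F_i$ built from the identity (\ref{21}), the differences $G_i(l_{k,k'-1,i})-G_i(l_{k,k'-1,i-1})$ are bounded by $|G_i'|\cdot(l_{k,k'-1,i}-l_{k,k'-1,i-1})\lesssim l_1^k\cdot\frac1{2n}(l_{k,k'-1,n}-l_{k,k'-1,-n})$, and the sum over $k'$ of these lengths telescopes to $C(l_2-l_1)$ --- which is also precisely where the factor $(\lambda_2-\lambda_1)$ in (\ref{ocV}) comes from (see (\ref{22})); your plan gives no source for that factor on the layers near $O$. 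The trace discussion and the treatment of the region away from $O$ in your write-up are fine, but without the approximation-plus-cancellation structure the central estimate remains unproved.
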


\begin{proof}[Proof of the lemma]
Let us prove first that $V(x,y;\te_1;\sigma)\in\Wp (D)$.
By property b), to this end, it  suffices
to show that
$$
\left\| V'_x\right\|_{L_2({\cal D}_2)}<\infty,\quad
\left\| V'_y\right\|_{L_2({\cal D}_2)}<\infty,
$$
where
${\cal D}_2:=D\cap \{a_2y>x+a_2-\frac 1{\alpha}\}$ and $
a_2=\sqrt{\frac {\lambda_2}{1-\lambda_2}}$. For example,
let us prove the first inequality (the other can be proved in a similar way).
For  $(x,y)\in {\cal D}_2$,  for all
$\mu \in [\lambda_1,\lambda_2]$,
 by the Riemann formula (see \cite{Tro94}) we have
\begin{equation}\label{13,5}
u(x,y;\te_1 ;\mu)=\frac {\sqrt{\mu}}{2\sqrt{1-\mu}}
 \int\limits_{P(x,y;\mu)}^{Q(x,y;\mu)}
\f(x',\mu)dx',
\end{equation}
where
\begin{equation}\label{phi}
\f(x,\mu):=\left.\left(\alpha u'_x
+\frac {1-\mu}{\mu} u'_y\right)\right|_{y=\alpha x},
\end{equation}
  and $ P(x,y;\mu)$ and $Q(x,y;\mu)$ are abscissaes  of
  the right and left angles of the characteristic triangle
  with the vertex at the point $(x,y)$ (this triangle  corresponds to the value
  $\lambda=\mu$ and leans on $OB$). It can readily be established
  that
\begin{equation}\label{P,Q}
P(x,y;\mu)=\frac {\alpha x-y}{2\alpha}l+ \frac {\alpha x+y}{2\alpha},
\quad Q(x,y;\mu)=\frac {\alpha x+y}{2\alpha}+
\frac {\alpha x-y}{2\alpha l},
\end{equation}
where
\begin{equation}\label{l(mu)}
l=l(\mu):=\frac{\sqrt{1-\mu}+\alpha\sqrt{\mu}}
{\sqrt{1-\mu}-\alpha\sqrt{\mu}}
\end{equation}
is a function strictly monotone increasing on the interval
$(0,(1+\alpha^2)^{-1})$ and taking the values in the
interval $(1,+\infty)$, and therefore
\begin{equation}
l(\mu)>1\quad\mbox{ for any } \mu\in [\lambda_1,\,\lambda_2].
\end{equation}

Thus, if $(x,y)\in {\cal D}_2$, then
$$
V(x,y;\te_1;\sigma)=\int\limits_{\lambda_1}^{\lambda_2}
 \frac {\sigma(\mu)\sqrt{\mu}}{2\sqrt{1-\mu}}
\left\{
\int\limits_{P(x,y;\mu)}^{Q(x,y;\mu)}
\f (\widetilde x,\mu)\,d\widetilde x\right\}d\mu,
$$
and therefore
$$
\left\| V'_x\right\|_{L_2
({\cal D}_2)}^2
=\int\!\!\!\int\limits_{{\cal D}_2}\left|\,
\int\limits_{\lambda_1}^{\lambda_2}
\frac {\sigma(\mu)\sqrt{\mu}}{2\sqrt{1-\mu}}
\left\{  Q'_x\,\f(Q,\mu) -
 P'_x\,\f(P,\mu)
\right\}d\mu\right|^2dxdy\leq
$$
\begin{equation}\label{15}
\leq
2\int\!\!\!\int\limits_{{\cal D}_2}\left|\,
\int\limits_{\lambda_1}^{\lambda_2}
\frac {\sigma(\mu)\sqrt{\mu}}{2\sqrt{1-\mu}}
  P'_x\,\f(P,\mu)
d\mu\right|^2dxdy
+2\int\!\!\!\int\limits_{{\cal D}_2}\left|\,
\int\limits_{\lambda_1}^{\lambda_2}
\frac {\sigma(\mu)\sqrt{\mu}}{2\sqrt{1-\mu}}
 Q'_x\,\f(Q,\mu)
d\mu\right|^2dxdy.
\end{equation}
Thus, our problem is reduced to the existence problem for these integrals, for any
function
$\te_1 \in L_2(0, 1)$ and any
$\sigma(\mu)\in C^1[\lambda_{*},\lambda_{**}]$.

Note that the set of all piecewise constant
functions defined on the interval $(0,\, 1)$ and such that the ends of the
intervals on which the functions are constant are rationals is dense in $L_2(0, 1)$,
and therefore, for $\te_1 (x)$, we take  the  function
 \begin{equation} \label{17}
\te_{1,n}(x)=\left\{
\begin{array}{ll}
\dots       & \dots \\
c_i, & \frac{i-1}{n}<x<\frac {i}{n},\\
\dots       &   \dots
\end{array}
\right.
 \end{equation}
where $c_i\in \C$ are some numbers, $i=1,2,..,n $. In this case, the function
 $\f$ can be presented explicitly. Namely, let
\begin{equation}\label{mu}
\mu=\frac{(l-1)^2}{\alpha^2(l+1)^2+(l-1)^2},
\end{equation}
(it can readily be seen that  $\mu$ is the  function inverse to
$l(\mu)$, i.e.
$\mu(l(t))\equiv t$). Denote
$$
\widetilde{\f}(x,l):=\f(x,\mu(l)).
$$
Introduce the functions
\begin{equation}\label{x(k,j)}
 x_{k,j}:=\frac 1{\alpha l^k}\cdot
\frac {n+j}{2n}+\frac 1{\alpha l^{k+1}}\cdot\frac {n-j}{2n},\,\,
  k=0,1,2,...,\,\,j=0,\pm 1,...,\pm (n-1).
\end{equation}
Then
\begin{equation} \label{Phi}
\Phi(x,l):={\widetilde{\f}(x,l)}\frac {l-1}{2\alpha l}
=\left\{
\begin{array}{ll}
\dots       &   \dots \\
c_{n}l^k,       & x_{k,n-1}(l)<x<x_{k,n}(l), \\
\dots       &   \dots \\
c_il^k,       & x_{k,i-1}(l)<x<x_{k,i}(l), \\
\dots       &   \dots \\
c_1l^k,       & x_{k,0}(l)<x<x_{k,1}(l),\\
-c_1l^k,       & x_{k,-1}(l)<x<x_{k,0}(l),\\
-c_il^k,       & x_{k,-i}(l)<x<x_{k,1-i}(l), \\
\dots       &   \dots \\
-c_nl^k,       & x_{k,-n}(l)<x<x_{k,1-n}(l),\\
\dots       &   \dots\\
\end{array}
\right.
 \end{equation}
where $k=0,1,2,...,\,\,i=1,2,..,n$.

For example, consider the first integral on
the right-hand side of (\ref{15}).
It is clear that the existence of the integral is equivalent
to the convergence of the series
\begin{equation}\label{16}
\sum\limits_{k=0}^{+\infty}
\int\!\!\!\!\int\limits_{R_k}
\left|\int\limits_{\lambda_1}^{\lambda_2}
\frac {\sigma(\mu)\sqrt{\mu}}{2\sqrt{1-\mu}}
  P'_x\,\f(P,\mu)d\mu\right|^2dxdy,
\end{equation}
where
$$
R_k:=\left
\{(x,y)\in {\cal D}_2\,\left|
\,\frac 1{\alpha l_1^{k+1}}<x<\frac 1{\alpha l_1^{k}}\right.\right\},
\quad l_1:=l(\lambda_1)>1.
$$

Introduce the following notation:
$$
\widetilde{\sigma}(l):=
\frac {\sigma(\mu(l))\sqrt{\mu(l)}}{2\sqrt{1-\mu(l)}}
\,\mu'(l)\,\frac {2\alpha l}{l-1},\qquad
\widetilde{P_x}(x,y;l):= P'_x(x,y;\mu(l)).
$$
In this case,
$$
\int\limits_{\lambda_1}^{\lambda_2}
\frac {\sigma(\mu)\sqrt{\mu}}{2\sqrt{1-\mu}}
  P'_x(x,y;\mu)\,\f(P(x,y;\mu),\mu)d\mu=
$$
\begin{equation}\label{19}
=\int\limits_{l_1}^{l_2}\widetilde{\sigma}(l)
\widetilde{P_x}(x,y;l)\,\Phi(\widetilde{P}(x,y;l),l)
\,dl,\quad l_i=l(\lambda_i),\,\,i=1,2.
\end{equation}

For any chosen point $(x,y)\in R_k$, denote
by $l_{k,k',j}$
the values of
 $l$ corresponding to the intersection points  of the curve
$x=\widetilde{P}(x,y;l)$ with the curves $x_{k',j}(l)$ in the plane
$Olx$. Let $N(x,y;k)$ be the number of the values
$l_{k,k',j}$ belonging to the interval
$[l_1,\,l_2]$. Assume that $k'$ takes here the values  $\{k'_m, k'_m+1,...,k'_M\}$.
It can readily be  seen that, for sufficiently large values of
$k$, there is a constant $C_0$ (independent of $(x,y),\,k,\,l_1,\,l_2$ and
 depending on $\lambda_*,\,\lambda_{**}$ only) such that
$$
k\log_{l_2}l_1 -C_0 < k'_m\leq k'\leq k'_M\leq k+1.
$$
Therefore, for sufficiently large values of $k$,
$$
N(x,y;k)\leq k+1- k\log_{l_2}l_1 +C_0 \,\,(  < \gamma k),
$$
where the number $0<\gamma <1$ does not depend
on $(x,y),\,k,\,l_1,\,l_2$.

In this case, for any point $(x,y)\in R_k$, we have
$$
\int\limits_{l_1}^{l_2}{\cal P}
\,dl=
\int\limits_{l_1}^{l_{k,k'_M,n}} {\cal P}dl +
\int\limits_{l_{k,k'_M,n}}^{l_{k,k'_M-1,n}} {\cal P}dl +...+
\int\limits_{l_{k,k'_m,n}}^{l_2} {\cal P}dl,
$$
where  ${\cal P}={\cal P}(x,y,l)$
stands for the integrand in (\ref{19}).
Here, for any $k'=k'_m+1,...,k'_M$,
$$
\int\limits_{l_{k,k',n}}^{l_{k,k'-1,n}} {\cal P}dl =
\sum\limits_{i=1}^{n}
c_i\left(
\int\limits_{l_{k,k'-1,i-1}}^{l_{k,k'-1,i}}
\widetilde{\sigma}(l)
\widetilde{P_x}(x,y;l)l^{k'-1}dl-
\int\limits_{l_{k,k'-1,-i}}^{l_{k,k'-1,1-i}}
\widetilde{\sigma}(l)
\widetilde{P_x}(x,y;l)l^{k'-1}dl\right),
$$
and we can change the order of  summation, because all sums contain
finitely many summands for any chosen $k$ and $n$, and therefore
$$
\int\limits_{l_1}^{l_2}
{\cal P}\,dl=
\int\limits_{l_1}^{l_{k,k'_M,n}} {\cal P}dl +
\int\limits_{l_{k,k'_m,n}}^{l_2} {\cal P}dl+
$$
\begin{equation}\label{19,5}
+\sum\limits_{i=1}^{n}
c_i  \sum\limits_{k'=k'_m+1}^{k'_M}\left(
\int\limits_{l_{k,k'-1,i-1}}^{l_{k,k'-1,i}}
\widetilde{\sigma}(l)
\widetilde{P_x}(x,y;l)l^{k'-1}dl-
\int\limits_{l_{k,k'-1,-i}}^{l_{k,k'-1,1-i}}
\widetilde{\sigma}(l)
\widetilde{P_x}(x,y;l)l^{k'-1}dl\right).
\end{equation}
Thus, for any point $(x,y)\in R_k$,
$$
\left|\int\limits_{l_1}^{l_2} {\cal P}
\,dl\right|^2\leq
3 \left|\int\limits_{l_1}^{l_{k,k'_M,n}}{\cal P}dl\right|^2 +
3\left|\,\int\limits_{l_{k,k'_m,n}}^{l_2} {\cal P}dl\right|^2+
$$
$$
+3\left|\sum\limits_{i=1}^{n}
c_i  \sum\limits_{k'=k'_m+1}^{k'_M}\left(\,\,
\int\limits_{l_{k,k'-1,i-1}}^{l_{k,k'-1,i}}
\widetilde{\sigma}(l)
\widetilde{P_x}(x,y;l)l^{k'-1}dl-
\int\limits_{l_{k,k'-1,-i}}^{l_{k,k'-1,1-i}}
\widetilde{\sigma}(l)
\widetilde{P_x}(x,y;l)l^{k'-1}dl\right)\right|^2 .
$$
Let us estimate the last summand.
Denote $t(x,y;l):=
\widetilde{\sigma}(l) \widetilde{P_x}(x,y;l)$.
Since  $t(x,y;l)\in C^1[l_{*},l_{**}]$
for any point $(x,y)\in R_k$,  where
$l_{*}:=l(\lambda_*),\,l_{**}:=l(\lambda_{**})$, it follows that
$$
\int\limits_{l_{k,k'-1,i-1}}^{l_{k,k'-1,i}}
\widetilde{\sigma}(l)
\widetilde{P_x}(x,y;l)l^{k'-1}dl-
\int\limits_{l_{k,k'-1,-i}}^{l_{k,k'-1,1-i}}
\widetilde{\sigma}(l)
\widetilde{P_x}(x,y;l)l^{k'-1}dl=
$$
$$
=\frac{1}{k'}\left(\left.
t(x,y;l)l^{k'}\right|^{l_{k,k'-1,i}}_{l_{k,k'-1,i-1}}-
\left.t(x,y;l)l^{k'}\right|^{l_{k,k'-1,1-i}}_{l_{k,k'-1,-i}}
\right)-
$$
\begin{equation}\label{20}
-\int\limits_{l_{k,k'-1,i-1}}^{l_{k,k'-1,i}}
\frac{l^{k'}}{k'}t'_{l}(x,y;l)dl+
\int\limits_{l_{k,k'-1,-i}}^{l_{k,k'-1,1-i}}
\frac{l^{k'}}{k'}t'_{l}(x,y;l)dl.
\end{equation}
Note that, by the choice of $l_{k,k',j}$,
\begin{equation}\label{21}
 l_{k,k'-1,i}^{k'}t(x,y; l_{k,k'-1,i})=
\frac{\frac {i+n}{2n}l_{k,k'-1,i}+\frac {n-i}{2n}}{
\widetilde{P}(x,y;l_{k,k'-1,i})}\;t(x,y; l_{k,k'-1,i}).
\end{equation}
Denote
$$
G_i(z):=
\frac{\frac {i+n}{2n}z+\frac {n-i}{2n}}{
\widetilde{P}(x,y;z)}\;t(x,y;z),
 \quad i=0,\pm 1,\pm 2,...\pm n.$$
Then
$$
\left.t(x,y;l)l^{k'}\right|^{l_{k,k'-1,i}}_{l_{k,k'-1,i-1}}=
G_i(l_{k,k'-1,i})-G_i(l_{k,k'-1,i-1})+
\frac{\frac {1}{2n}(1-l_{k,k'-1,i-1})}{
\widetilde{P}(x,y;l_{k,k'-1,i-1})}t(x,y; l_{k,k'-1,i-1}).
$$
Similarly,
$$
\left.t(x,y;l)l^{k'}\right|^{l_{k,k'-1,1-i}}_{l_{k,k'-1,-i}}=
G_{-i}(l_{k,k'-1,1-i})-G_{-i}(l_{k,k'-1,-i})+
\frac{\frac {1}{2n}(1-l_{k,k'-1,1-i})}{
\widetilde{P}(x,y;l_{k,k'-1,1-i})}t(x,y; l_{k,k'-1,1-i}).
$$
Therefore, writing
$$
K(z):=\frac {z-1}{\widetilde{P}(x,y;z)}t(x,y;z),
$$
we obtain
$$
\left.
t(x,y;l)l^{k'}\right|^{l_{k,k'-1,i}}_{l_{k,k'-1,i-1}}-
\left.t(x,y;l)l^{k'}\right|^{l_{k,k'-1,1-i}}_{l_{k,k'-1,-i}}=
$$
$$
=\left(G_i(l_{k,k'-1,i})-G_i(l_{k,k'-1,i-1})\right)-
\left(G_{-i}(l_{k,k'-1,1-i})-G_{-i}(l_{k,k'-1,-i})\right)+
$$
$$
+\frac 1{2n}\left(K(l_{k,k'-1,i-1})-K(l_{k,k'-1,1-i})\right).
$$
It can readily be proved that there are
constants $C_2,\,C_3$ independent of
$l_1,\,l_2,\,n,\,k$ and such that
$$
\left|G'_{i}(z)\right|\leq C_2 \,l_1^k,\quad
\left|K'(z)\right|\leq C_3 \,l_1^k
$$
for any point  $(x,y)\in R_k$, for any
 $i=0,\pm 1,\pm 2,...\pm n$, and for any
$z\in [l_{*},\,l_{**}]$. Hence,
$$
\frac 1{k'}\Biggl|\left.
t(x,y;l)\;l^{k'}\right|^{l_{k,k'-1,i}}_{l_{k,k'-1,i-1}}-
\left.t(x,y;l)\;l^{k'}\right|^{l_{k,k'-1,1-i}}_{l_{k,k'-1,-i}}
\Biggr|\leq
$$
$$
\leq
\frac {l_1^k}{k'}\left(C_2(l_{k,k'-1,i}-l_{k,k'-1,i-1})+
C_2(l_{k,k'-1,1-i}-l_{k,k'-1,-i})+
\frac 1{2n} C_3  (l_{k,k'-1,i-1}-l_{k,k'-1,1-i})\right).
$$
Further, by the relation
$$
\frac{x_{k'-1,i}(l_{k,k'-1,i-1})-x_{k'-1,i-1}(l_{k,k'-1,i-1})}
{x_{k'-1,n}(l_{k,k'-1,i-1})-x_{k'-1,-n}(l_{k,k'-1,i-1})}=
\frac 1{2n},
$$
there exists a constant
$C_{4}$ independent of $n,\,k,\,(x,y),\, k',\, i$ and such that
$$
\left(l_{k,k'-1,i}-l_{k,k'-1,i-1}\right)  \leq
\frac {C_{4}} {2n} (l_{k,k'-1,n}-l_{k,k'-1,-n}),
$$
and therefore
$$
\frac 1{k'}\Biggl|\left.
t(x,y;l)\;l^{k'}\right|^{l_{k,k'-1,i}}_{l_{k,k'-1,i-1}}-
\left.t(x,y;l)\;l^{k'}\right|^{l_{k,k'-1,1-i}}_{l_{k,k'-1,-i}}
\Biggr|\leq
$$
$$
\leq\frac {C_5\,l_1^k}{2nk'}\left(
  l_{k,k'-1,n}-l_{k,k'-1,-n}\right),
$$
where $C_{5}$ does not depend on $l_1,\,l_2,\,n,\,k,\,(x,y),
\, k',\, i$.

Let us now obtain  similar bounds for the integrals
in the right-hand side of  (\ref{20}). Consider the first integral.
Since the functions in the integrand are continuous, it follows that
$$
\int\limits_{l_{k,k'-1,i-1}}^{l_{k,k'-1,i}}
\frac{l^{k'}}{k'}\;t'_{l}(x,y;l)\,dl=
\frac{t'_{l}(x,y;\eta_{k,k',i})\;\eta_{k,k',i}}{k'}
\int\limits_{l_{k,k'-1,i}}^{l_{k,k'-1,i-1}}l^{k'-1}\,dl=
$$
$$
=\frac{t'_{l}(x,y;\eta_{k,k',i})\;\eta_{k,k',i}}{(k')^2}
\left(l_{k,k'-1,i}^{k'}-   l_{k,k'-1,i-1}^{k'}\right),
$$
where $\eta_{k,k',i}\in [l_{k,k'-1,i-1}, l_{k,k'-1,i}]$. We have
(see (\ref{21})):
$$
l_{k,k'-1,i}^{k'}-   l_{k,k'-1,i-1}^{k'}=
F_i(l_{k,k'-1,i})-F_i(l_{k,k'-1,i-1})+
\frac{\frac {1}{2n}(1-l_{k,k'-1,i-1})}{
\widetilde{P}(x,y;l_{k,k'-1,i-1})},
$$
where
$$
F_i(z):=
\frac{\frac {i+n}{2n}z+\frac {n-i}{2n}}{
\widetilde{P}(x,y;z)}, \quad i=0,\pm 1,\pm 2,...\pm n.
$$
Here there are constants
$C_6,\,\,C_7$ independent of $l_1,\,l_2,\,n,\,k,\,k',\,i$ and such  that
the following bounds hold
for any points $(x,y)\in R_k$,
for any  $i=0,\pm 1,\pm 2,...\pm n$, and
for any $z\in [l_{*},\,l_{**}]$:
$$
\left|F'_{i}(z)\right|\leq C_6 \;l_1^k,\quad
\left|\frac{(1-l_{k,k'-1,i-1})}{
\widetilde{P}(x,y;l_{k,k'-1,i-1})}\right|\leq C_7\; l_1^k,
$$
and therefore
$$
\left|\,\,\int\limits_{l_{k,k'-1,i-1}}^{l_{k,k'-1,i}}
\frac{l^{k'}}{k'}\;t'_{l}(x,y;l)\,dl\right|\leq
\frac {C_8 \,l_1^k}{(k')^2}
\left(
l_{k,k'-1,i}-l_{k,k'-1,i-1}\right)  +
\frac {C_9 \,l_1^k}{2n\,(k')^2}\leq
$$
$$
\leq
\frac {C_{10} \,l_1^k}{2n\,(k')^2} \left(
l_{k,k'-1,n}-l_{k,k'-1,-n}\right)  +
\frac {C_9 \,l_1^k}{2n\,(k')^2},
$$
where  $C_8,\,\,C_9,\,\,C_{10}$ are constants independent of  $l_1,\,l_2,\,n,\,k,\,(x,y),\,k',\,i$.
It is clear that  the second integral summand in  (\ref{20})
can be estimated in a similar way, and therefore
$$
\left|\,
\int\limits_{l_{k,k'-1,i-1}}^{l_{k,k'-1,i}}
\widetilde{\sigma}(l)\;
\widetilde{P_x}(x,y;l)\;l^{k'-1}\,dl-
\int\limits_{l_{k,k'-1,-i}}^{l_{k,k'-1,1-i}}
\widetilde{\sigma}(l)\;
\widetilde{P_x}(x,y;l)\;l^{k'-1}\,dl\right|\leq
$$
$$
\leq\frac {C_{11} \,l_1^k}{n\,k'}
\left(
l_{k,k'-1,n}-l_{k,k'-1,-n}\right)  +
\frac {C_{12} \,l_1^k}{2n\,(k')^2},
$$
where the constants $C_{11},\,\,C_{12}$ do not depend on
$l_1,\,l_2,\,n,\,k,\,(x,y),\,k',\,i$.

Let us now estimate the sum
$$
  \sum\limits_{k'=k'_m+1}^{k'_M}\left|\,\,
\int\limits_{l_{k,k'-1,i-1}}^{l_{k,k'-1,i}}
\widetilde{\sigma}(l)\;
\widetilde{P_x}(x,y;l)\;l^{k'-1}\,dl-
\int\limits_{l_{k,k'-1,-i}}^{l_{k,k'-1,1-i}}
\widetilde{\sigma}(l)\;
\widetilde{P_x}(x,y;l)\;l^{k'-1}\,dl\right|\leq
$$
$$
\leq  \sum\limits_{k'=k'_m+1}^{k'_M}\left|\,\,
\frac {C_{11} \,l_1^k}{nk'}
\left(
l_{k,k'-1,n}-l_{k,k'-1,-n}\right) \right|
 +   \sum\limits_{k'=k'_m+1}^{k'_M}\left|
\frac {C_{12} \,l_1^k}{2n(k')^2} \right|\leq
$$
\begin{equation}\label{22}
\leq\frac {C_{11} \,l_1^k}{nk(1-\gamma)}
  \sum\limits_{k'=k'_m+1}^{k'_M}
\left(
l_{k,k'-1,n}-l_{k,k'-1,-n}\right)
 +
\frac {C_{12} \,l_1^k}{2nk(1-\gamma)}\sum\limits_{k'=k'_m+1}^{k'_M}
\frac 1{k'}
\leq
\frac {C_{13} \,l_1^k}{nk} (l_2-l_1),
\end{equation}
where  $C_{13}$ does not depend on $l_1,\,l_2,\,(x,y),\,k,\,i$.
Let us clarify the bound for the second
summand. For sufficiently large $k$, we  have
$C_0+1\leq  k\left(1-\log_{l_2}l_1\right) $,
and thus
$$
\sum\limits_{k'=k'_m+1}^{k'_M}
\frac 1{k'}
\leq \frac{ N(x,y;k)}  {k'_m+1}  \leq
\frac {k\left(1-\log_{l_2}l_1\right)+C_0+1 } {k(1-\gamma)}
\leq
$$
$$
\leq\frac {2k\left(\log_{l_2}l_2-\log_{l_2}l_1\right)} {k(1-\gamma)}
\leq  C_{14} (l_2-l_1),
$$
where $C_{14}$ depends on $l_*,\,l_{**}$ only, which
proves the bound (\ref{22}).
Then we have
$$
\left|\sum\limits_{i=1}^{n}
c_i  \sum\limits_{k'=k'_m+1}^{k'_M}\left(\,\,
\int\limits_{l_{k,k'-1,i-1}}^{l_{k,k'-1,i}}
\widetilde{\sigma}(l)\;
\widetilde{P_x}(x,y;l)\;l^{k'-1}\,dl-
\int\limits_{l_{k,k'-1,-i}}^{l_{k,k'-1,1-i}}
\widetilde{\sigma}(l)\;
\widetilde{P_x}(x,y;l)\;l^{k'-1}\,dl\right)\right|^2\leq
$$
$$
\leq\left|\sum\limits_{i=1}^{n}
|c_i|
\frac {C_{13} \,l_1^k}{nk} (l_2-l_1)
\right|^2=
\frac {C_{13}^2 \,l_1^{2k}}{n^2k^2} (l_2-l_1)^2
          \left|\sum\limits_{i=1}^{n}
|c_i|
\right|^2\leq
\frac {C_{13}^2 \;l_1^{2k}}{n\,k^2} (l_2-l_1)^2
          \left(\sum\limits_{i=1}^{n}|c_i|^2
\right).
$$

It is clear that the first two summands in
(\ref{19,5}) can be estimated in a similar way, and therefore
 the following inequality holds  for any point $(x,y)\in R_k$:
$$
\left|\int\limits_{l_1}^{l_2}
{\cal P}\,dl\right|^2\leq
\frac {C_{15}^2 l_1^{2k}}{nk^2} (l_2-l_1)^2
          \left(\sum\limits_{i=1}^{n}|c_i|^2
\right),
$$
where $C_{15}$ depends  on  $l_*,\,l_{**}$ only. Then
$$
\int\!\!\!\int\limits_{{\cal D}_2}\left|\,
\int\limits_{\lambda_1}^{\lambda_2}\frac{\sigma(\mu)\sqrt{\mu}}
{2\sqrt{1-\mu}}
\frac {\d P}{\d x}(x,y;\mu)\;\f(P(x,y;\mu),\mu)
\,d\mu\right|^2dxdy
=\sum\limits_{k=0}^{+\infty}
\int\!\!\!\!\int\limits_{R_k}
\left|\int\limits_{l_1}^{l_2}
              {\cal P} \,dl
\right|^2dxdy\leq
  $$
$$
\leq\sum\limits_{k=k_0}^{+\infty}  S(R_k)
\frac {C_{15}^2 l_1^{2k}}{nk^2} (l_2-l_1)^2
          \left(\sum\limits_{i=1}^{n}|c_i|^2
\right) \leq
\sum\limits_{k=k_0}^{+\infty}
\frac {C_{16} }{nk^2} (l_2-l_1)^2
          \left(\sum\limits_{i=1}^{n}|c_i|^2
\right) \leq
$$
$$
\leq\frac {C_{16} }{n} (l_2-l_1)^2
          \left(\sum\limits_{i=1}^{n}|c_i|^2
\right) \sum\limits_{k=k_0}^{+\infty}
\frac {1}{k^2} \leq
\frac {C_{17} }{n} (l_2-l_1)^2
          \left(\sum\limits_{i=1}^{n}|c_i|^2
\right) =
C_{17} (l_2-l_1)^2
\|\te_{1,n}(x)\|^2_{L_2},
$$
where $S(R_k)$ stands for the area of the domain  $R_k$ and the constants
$C_{16},\,C_{17}$ depend  on $l_*,\,l_{**}$ and
$\sigma(\mu)$ only.
Since the quantity $C_{17} (l_2-l_1)^2$ does not depend on $n$, it follows that
the  inequality thus obtained  holds  for an arbitrary function
$\te_1(x)\in L_2(0,1)$,
$$
\int\!\!\!\int\limits_{{\cal D}_2}\left|\,
\int\limits_{\lambda_1}^{\lambda_2}\sigma(\mu)\;
 \frac {\d P}{\d x}(x,y;\mu)\;\f(P(x,y;\mu),\mu)\,
d\mu\right|^2\,dxdy\leq
C_{17} (l_2-l_1)^2
\|\te_1(x)\|^2_{L_2}.
$$
Obviously, the other integral in  (\ref{19}) can be  estimated  in a similar way,  and therefore
\begin{equation}\label{v'x1}
\left\| V'_x\right\|_{L_2
({\cal D}_2)}^2\leq
C_{18} (l_2-l_1)^2\|\te_1 (x)\|^2_{L_2}\leq
C_{19}(\lambda_2-\lambda_1)^2\|\te_1 (x)\|^2_{L_2},
\end{equation}
where $C_{19}$ does not depend on $\te_1,\,\lambda_1,\,\lambda_2.$
This proves the first assertion of the lemma.

To prove the other assertion of the lemma,  let
us estimate the quantity
$
 \left\| V'_x\right\|_{L_2
(D\setminus \overline{{\cal D}_2})}^2.
$
Let
$M(x,y)\in \left\{D\setminus \overline{{\cal D}_2}\right\}$,
i.e. let $M(x,y)\in \triangle AA_{\lambda_2}B$.
Let us draw the characteristic of the first family for equation (\ref{10})
through $M(x,y)$ until this characteristic meets the segment
$OA$
at some point $L(x-ay;0)$.
After this,  draw the characteristic of
 the other family for (\ref{10}) through $M(x,y)$ until it
meets the segment $OB$
at some point $K\left(\frac{x+ay}{1+a\alpha};
\alpha\frac{x+ay}{1+a\alpha}\right)$.
Note that,  by properties b), c), and d) of the function
$u(x,y;\te_1;\lambda)$,  the polygon
$T(x,y,\lambda)$
with the vertices at the points  $M,\,K,\,B,\,A,\,L$ satisfies the following
conditions:
$$
\int\limits_{\d T(x,y;\lambda)}
\frac{\d u}{\d x}\,dy+\frac 1{a^2}
\frac{\d u}{\d y}\,dx =
\int\!\!\!\!\!\!\int\limits_{T(x,y;\lambda)}
\left(\frac{\d^2 u}{\d x^2}-\frac 1{a^2}\frac{\d^2 u}{\d y^2}
\right)\,dx\,dy =0.
$$
This implies the relation
\begin{equation}
u|_M=u(x,y;\te_1;\lambda)=
-\frac a2 \int\limits_{x-ay}^{1/\alpha}
\f_0(t,\lambda)\,dt
+\frac a{2}\int\limits_{A}^{B}\te_1(t) \,dt+
\frac a{2}\int\limits_{1/\alpha}^{\frac{x+ay}{1+a\alpha}}
 \;\f(t,\lambda) \,dt,
\end{equation}
where
$\f_0(t,\lambda):=
\displaystyle\frac 1{a^2}\left.\frac{\d u}{\d y}\right|_{y=0}$.
Therefore, for almost all  $(x,y)\in \left\{D\setminus \overline{
{\cal D}_2}\right\}$, we have
$$
 V'_x=\int_{\lambda_1}^{\lambda_2}
u(x,y;\te_1;\lambda)'_{x}d\lambda =
$$
\begin{equation}
=
\int_{\lambda_1}^{\lambda_2}\frac a2\left(
 \f_0\left(x-ay,\lambda\right)
+
 \frac{1}{1+a\alpha} \;\f\left(
\frac{x+ay}{1+a\alpha},\lambda\right)\right)d\lambda.
\end{equation}
Then
$$
\left\| V'_x\right\|^2_{L_2(D\setminus
\overline{{\cal D}_2})}=
\int\!\!\!\!\!\int\limits_{D\setminus \overline{{\cal D}_2}}\left\{
\int_{\lambda_1}^{\lambda_2}\frac a2\left(
 \f_0\left(x-ay,\lambda\right)
+   \frac{1}{1+a\alpha}\; \f\left(
\frac{x+ay}{1+a\alpha},\lambda\right)\right)d\lambda
\right\}^2 \,dx\,dy\leq
$$
$$
\leq(\lambda_2-\lambda_1)
\int\!\!\!\!\!\int\limits_{D\setminus \overline{{\cal D}_2}}
\int_{\lambda_1}^{\lambda_2}\frac {a^2}4\left|
 \f_0\left(x-ay,\lambda\right)
+   \frac{1}{1+a\alpha} \;\f\left(
\frac{x+ay}{1+a\alpha},\lambda\right)\right|^2d\lambda
 \,dx\,dy\leq
$$
$$\leq M_0(\lambda_2-\lambda_1)
\int_{\lambda_1}^{\lambda_2}\left\{
\int\!\!\!\!\!\int\limits_{D\setminus \overline{{\cal D}_2}}
\left|
 \f_0\left(x-ay,\lambda\right)\right|^2
+   \left| \f\left(
\frac{x+ay}{1+a\alpha},\lambda\right)\right|^2
 \,dx\,dy\right\} \,d\lambda\leq
$$
\begin{equation}
\leq M_1(\lambda_2-\lambda_1)
\int_{\lambda_1}^{\lambda_2}\left\{
\left\|
 \f_0\right\|^2_{L_2(A^1(\lambda_2)A)}
+   \left\| \f\right\|^2_{L_2(B^2(\lambda_2)B
)} \,\right\} d\lambda\leq
 M_2(\lambda_2-\lambda_1)^2\|\te_1\|^2_{L_2},
 \end{equation}
where the positive constants  $M_i, i=0,1,2$ do not depend on
 $\lambda_1$, $\lambda_2$ and $\te_1$.
Taking (\ref{v'x1}) into account, we obtain
$$
 \left\| V'_x\right\|_{L_2(D)}^2\leq M_3
 (\lambda_2-\lambda_1)^2\|\te_1\|^2_{L_2},
$$
where the constant $M_3$ does not depend on
 $\lambda_1$, $\lambda_2$ and $\te_1$.
Obviously,  for
$
 \left\| V'_y\right\|_{L_2
(D\setminus \overline{{\cal D}_2})}^2,
$
we have a  similar bound.
This proves the lemma.
\end{proof}

\begin{proof}[Proof of Theorem \ref{teo:odinn}]
Lemma \ref{lem:odin} implies that the function
$\|U(x,y;\te_1;\sigma;\lambda)\|_{1}$ is  continuous
and even the absolute continuous with respect to  $\lambda$.

Let us now prove  that formula (\ref{11}) holds  for our function $U$
for any $\lambda_1,\,\lambda_2\in (0,(1+\alpha^2)^{-1})$.
Since
$$
 U(x,y;\te_1;\sigma;\lambda_2)-U(x,y;\te_1;\sigma;\lambda_1) =
\int_{\lambda_1}^{\lambda_2}
\sigma(\lambda) \;u(x,y;\te_1;\lambda) \,d\lambda
$$
and
$$
d U(x,y;\te_1;\sigma;\lambda) = \sigma(\lambda)
 \;u(x,y;\te_1;\lambda) \,d\lambda ,
$$
it follows that we are to prove the relation
$$
A\,\int_{\lambda_1}^{\lambda_2}
\sigma(\lambda) \;u(x,y;\te_1;\lambda) \,d\lambda =
\int_{\lambda_1}^{\lambda_2}\lambda\;
\sigma(\lambda) \;u(x,y;\te_1;\lambda) \,d\lambda ,
$$
or, equivalently, after applying  the
Laplace operator to this relation, to prove the
equality
$$
\frac{\d^2}{\d y^2} \,\int_{\lambda_1}^{\lambda_2}
\sigma(\lambda) \;u(x,y;\te_1;\lambda) \,d\lambda = \Delta
\int_{\lambda_1}^{\lambda_2}\lambda\,
\sigma(\lambda) \;u(x,y;\te_1;\lambda) \,d\lambda
$$
in the space  $W^{-1}_2(D)$.
This means that, for an arbitrary $g\in C_0^{\infty}(D)$, we must prove that
$$
\int\limits_{D}\!\!\!\int\left\{
\frac{\d^2}{\d y^2} \,\int_{\lambda_1}^{\lambda_2}
\sigma(\lambda) \;u(x,y;\te_1;\lambda) \,d\lambda \cdot g  - \Delta
\int_{\lambda_1}^{\lambda_2}\lambda\,
\sigma(\lambda) \;u(x,y;\te_1;\lambda) \,d\lambda \cdot g \right\}dxdy =0.
$$

Since  the function
$\sigma(\lambda) u(x,y;\te_1 ;\lambda)\in
W^1_2(D\cap \{x>\varepsilon\})$, $\varepsilon>0,$
is a generalized solution of  (\ref{10}) by construction, it follows that
$$
\int\limits_{D}\sigma u\,\frac {\d^2 g}{\d z^2}\;dxdy=
\lambda\,\int\limits_{D}\sigma\, u\;\Delta g \,dxdy
$$
for any  $g\in C_0^{\infty}(D)$.  Therefore
$$
\int\limits_{D}\!\!\!\int \left\{
\frac{\d^2}{\d y^2} \,\int_{\lambda_1}^{\lambda_2}
\sigma(\lambda) \;u(x,y;\te_1;\lambda) \,d\lambda \cdot g  - \Delta
\int_{\lambda_1}^{\lambda_2}\lambda\;
\sigma(\lambda) \;u(x,y;\te_1;\lambda) \,d\lambda \cdot g \right\}dxdy=
$$
$$
=\int\limits_{D}\!\!\!\int \left\{
 \int_{\lambda_1}^{\lambda_2}
\sigma(\lambda) \;u(x,y;\te_1;\lambda) \,d\lambda \cdot
\frac{\d^2 g}{\d y^2}  -
\int_{\lambda_1}^{\lambda_2}\lambda\;
\sigma(\lambda) \;u(x,y;\te_1;\lambda) \,d\lambda \cdot
\Delta g \right\}dxdy =
$$
$$
=\int\limits_{D}\!\!\!\int  \int_{\lambda_1}^{\lambda_2}\left\{
\sigma(\lambda) \;u(x,y;\te_1;\lambda)  \cdot
\frac{\d^2 g}{\d y^2}  -
\lambda\;
\sigma(\lambda) \;u(x,y;\te_1;\lambda) \cdot
\Delta g \right\} d\lambda \,dxdy =
$$
$$
= \int_{\lambda_1}^{\lambda_2} \int\limits_{D}\!\!\!\int \left\{
\sigma(\lambda) \;u(x,y;\te_1;\lambda)  \cdot
\frac{\d^2 g}{\d y^2}  -
\lambda \;
\sigma(\lambda) \;u(x,y;\te_1;\lambda) \cdot
\Delta g \right\}  dxdy\,d\lambda =0.
$$

 This completes the proof of  Theorem \ref{teo:odinn}.
\end{proof}

Let $\te_2(x)\in L_2(0,1/\alpha)$ be an arbitrary function.
In just the same way in which the function
$u(x,y;\te_1;\lambda)$ was constructed for $\lambda\in (0,\,(1+\alpha^2)^{-1})$,
we construct the function
$v(x,y;\te_2;\lambda)$ for $\lambda\in ((1+\alpha^2)^{-1},\,1)$, namely,
the function $v$ is the generalized solution of
(\ref{10}) belonging  to the space
$W^1_2(D\cap \{y<1-\varepsilon\})$ for any $0<\varepsilon <1$,
and satisfying  the relations
$v|_{\partial D}=0$ and $v_y|_{[OA]}=\theta_2$.
Consider the function
 \begin{equation} \label{24}
w(x,y;
\lambda):=\left\{
\begin{array}{ll}
u(x,y;\te_1;\lambda),
&\mbox{ for }  0<\lambda   <(1+\alpha^2)^{-1}, \\
v(x,y;\te_2;\lambda),
&\mbox{ for } (1+\alpha^2)^{-1}<\lambda <1.
\end{array}
\right.
 \end{equation}

\begin{cor}\label{cor:odin-tri}
If $\sigma (\lambda)$ is an arbitrary function in $C^1(0,1)$,
whose support $\supp \sigma $ belongs to
 the union of the intervals
$(0,(1+\alpha^2)^{-1} )$ and $((1+\alpha^2)^{-1},1)$, then
the function
 \begin{equation}\label{25}
U(x,y;\lambda):=
\int\limits_{0}^{\lambda}\sigma (\mu)w(x,y;\mu)d\mu,\quad
\lambda\in [0,\,1],
 \end{equation}
with values in the space $\Wp $
is a differential solution of $(\ref{9})$.
\end{cor}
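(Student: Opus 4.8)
The plan is to reduce Corollary \ref{cor:odin-tri} to Theorem \ref{teo:odinn} together with its exact analogue for the family $v$, using the fact that the differential solutions of (\ref{9}) form a linear subspace of $C([0,1];\Wp(D))$.

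First I would split $\sigma$ according to the two subintervals. Let $\sigma_1$ be the function equal to $\sigma$ on $(0,(1+\alpha^2)^{-1})$ and to $0$ on $[(1+\alpha^2)^{-1},1)$, and let $\sigma_2$ be the function equal to $\sigma$ on $((1+\alpha^2)^{-1},1)$ and to $0$ on $(0,(1+\alpha^2)^{-1}]$. Since $\sigma$ vanishes in a neighbourhood of the point $(1+\alpha^2)^{-1}$, both $\sigma_1$ and $\sigma_2$ belong to $C^1(0,1)$, one has $\sigma=\sigma_1+\sigma_2$, and $\supp\sigma_i$ lies in a compact subinterval of the $i$-th of these two intervals; in particular one may choose $0<\lambda_*<\lambda_{**}<(1+\alpha^2)^{-1}$ with $\supp\sigma_1\subset[\lambda_*,\lambda_{**}]$. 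Using (\ref{24}) (the value of the integrand at the single point $\mu=(1+\alpha^2)^{-1}$ being irrelevant for the integral), the function (\ref{25}) then decomposes as $U=U_1+U_2$, where
$$
U_1(x,y;\lambda)=\int_0^\lambda\sigma_1(\mu)\,u(x,y;\te_1;\mu)\,d\mu,\qquad
U_2(x,y;\lambda)=\int_0^\lambda\sigma_2(\mu)\,v(x,y;\te_2;\mu)\,d\mu .
$$

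Next I would observe that $U_1$ is precisely the function $U(x,y;\te_1;\sigma_1;\lambda)$ introduced in (\ref{13}) (with the generic $C^1[\lambda_*,\lambda_{**}]$ density there taken to be $\sigma_1$), so Theorem \ref{teo:odinn} already tells us that $U_1$ is a differential solution of (\ref{9}). For $U_2$ I would invoke the counterparts of Lemma \ref{lem:odin} and Theorem \ref{teo:odinn} in which $u(x,y;\te_1;\lambda)$ on $(0,(1+\alpha^2)^{-1})$ is replaced by $v(x,y;\te_2;\lambda)$ on $((1+\alpha^2)^{-1},1)$. These hold with the same proofs, since $v$ is built by exactly the reflection-and-Riemann-formula scheme used for $u$, with the characteristic rays now issuing from the vertices of the side $[OA]$ and the Cauchy data $v_y|_{[OA]}=\te_2$ placed there; equivalently, they follow from Lemma \ref{lem:odin} and Theorem \ref{teo:odinn} via the change of variables $(x,y;\lambda)\mapsto(y,x;1-\lambda)$, which leaves equation (\ref{10}) form-invariant, turns $D$ into a triangle of the same shape with $\alpha$ replaced by $1/\alpha$, carries the critical value $(1+\alpha^2)^{-1}$ to $(1+\alpha^{-2})^{-1}$ and the interval $((1+\alpha^2)^{-1},1)$ onto $(0,(1+\alpha^{-2})^{-1})$, and transforms each bound of Lemma \ref{lem:odin} into one of the same form. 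Hence $U_2$ is a differential solution of (\ref{9}) as well.

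Finally I would use linearity: if $U'$ and $U''$ are differential solutions of (\ref{9}), then $U'+U''$ is again continuous with values in $\Wp(D)$, vanishes at $\lambda=0$, and, because $A$ and the vector-valued Stieltjes integral are linear,
$$
A\bigl((U'+U'')(\lambda_2)-(U'+U'')(\lambda_1)\bigr)=\int_{\lambda_1}^{\lambda_2}\lambda\,dU'(\lambda)+\int_{\lambda_1}^{\lambda_2}\lambda\,dU''(\lambda)=\int_{\lambda_1}^{\lambda_2}\lambda\,d(U'+U'')(\lambda)
$$
for all $\lambda_1,\lambda_2\in[0,1]$; by the characterization recalled just before Theorem \ref{teo:odinn} this forces $U'+U''=E_\lambda h$ for some $h\in\Wp(D)$, that is, $U'+U''$ is a differential solution. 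Applying this with $U'=U_1$ and $U''=U_2$ gives the corollary. The only point that is not routine bookkeeping is the passage of Lemma \ref{lem:odin} and Theorem \ref{teo:odinn} to the family $v$; I expect this to be the step requiring care, though it is not a genuine obstacle, being a direct consequence of the symmetry $(x,y;\lambda)\mapsto(y,x;1-\lambda)$ above (one need only check that, under this map, the conditions $v|_{\partial D}=0$, $v_y|_{[OA]}=\te_2$ match the conditions $u|_{\partial D}=0$, $u_x|_{AB}=\te_1$ of Section \ref{sec:pervaja}). If, moreover, $\supp\sigma_i$ should accumulate at an endpoint of its interval, one first applies the argument to $\sigma_i$ times a smooth cutoff and passes to the limit, using the bound (\ref{ocV}) to see that the resulting functions converge in $\Wp(D)$ uniformly in $\lambda$, so that the three defining properties of a differential solution are preserved.
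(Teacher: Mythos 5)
Your argument is essentially the intended one (the paper states the corollary without a separate proof): split $\sigma$ into its two pieces supported in $(0,(1+\alpha^2)^{-1})$ and $((1+\alpha^2)^{-1},1)$, apply Theorem \ref{teo:odinn} to the $u$-part, invoke the verbatim analogue of Lemma \ref{lem:odin} and Theorem \ref{teo:odinn} for the family $v$, and add, using that the differential solutions of $(\ref{9})$ form a linear subspace. The only inaccuracy is in your optional symmetry justification: the map $(x,y;\lambda)\mapsto(y,x;1-\lambda)$ does preserve the form of equation $(\ref{10})$, but it sends $D$ to its mirror image across $y=x$, which is not a triangle of the form $(\ref{8})$ with $\alpha$ replaced by $1/\alpha$, and under it the data side $[OA]$ and the accumulation vertex $B$ of the $v$-problem do not land on the data side $[AB]$ and the vertex $O$ of the $u$-problem (the correct reduction is the affine map fixing $A$ and swapping $O$ with $B$); since you also offer the alternative of simply repeating the proofs with the roles of the two legs and of the vertices $O$, $B$ interchanged --- which is exactly what the paper's phrase ``in just the same way'' refers to --- this does not affect the validity of your proof.
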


\begin{cor}[ existence theorem for the absolutely continuous spectrum]
\label{cor:odin-chetyre}
Let $H_{0}$ be the closure of the linear span of
all differential solutions of the form $(\ref{25})$ in $\Wp(D)$.
 Theorem {\rm\ref{teo:odinn}}  proved above implies that
  the spectrum of the operator $A$ is absolutely continuous on $H_{0}$,
$\sigma (A|_{H_0})=\sigma_{ac}(A|_{H_0})$.
\end{cor}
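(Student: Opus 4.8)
The plan is to derive the corollary from Theorem~\ref{teo:odinn} (and its evident counterpart for the functions $v$) by standard spectral-theoretic bookkeeping; the two points needing a little care are that the subspace $H_0$ actually \emph{reduces} $A$ and that absolute continuity of the spectral measure passes from a spanning family to the closed span.

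Write $\mu_h$ for the spectral measure $d(E_\lambda h,h)_1$ of $A$ and set
\[
H_{ac}:=\{h\in\Wp(D)\ :\ \mu_h\ \text{is absolutely continuous with respect to Lebesgue measure on }[0,1]\}.
\]
I would first record two elementary facts. First, $H_{ac}$ is a \emph{closed} linear subspace of $\Wp(D)$: linearity follows from $\mu_{ch}=|c|^2\mu_h$ together with $\mu_{h+g}(\omega)=\|E(\omega)(h+g)\|_1^2\le 2\mu_h(\omega)+2\mu_g(\omega)$ for every Borel $\omega\subset[0,1]$; closedness follows since, for a fixed Lebesgue-null Borel set $N$ and $h_n\in H_{ac}$ with $h_n\to h$, one has $\mu_h(N)=\|E(N)h\|_1^2=\lim_n\|E(N)h_n\|_1^2=\lim_n\mu_{h_n}(N)=0$, so $\mu_h$ is absolutely continuous. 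Second, $H_{ac}$ reduces $A$ (the $E_\lambda$ commute with $A$ and preserve $H_{ac}$, since $\mu_{E_\lambda h}\le\mu_h$) and $A|_{H_{ac}}$ has purely absolutely continuous spectrum by the very definition of $H_{ac}$.

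Next I would show $\mathcal L\subset H_{ac}$, where $\mathcal L$ denotes the linear span of all differential solutions of the form $(\ref{25})$ supplied by Corollary~\ref{cor:odin-tri}. Since $\supp\sigma$ is a closed subset of $(0,1)$ contained in the two disjoint open intervals, every admissible $\sigma$ splits as $\sigma=\sigma_-+\sigma_+$ with $\sigma_\pm\in C^1(0,1)$, $\supp\sigma_-\subset(0,(1+\alpha^2)^{-1})$, $\supp\sigma_+\subset((1+\alpha^2)^{-1},1)$, and correspondingly the element $(\ref{25})$ is a sum of two differential solutions of the same form whose $\sigma$ is supported in a single interval; hence it suffices to treat a generator $h:=\int_0^1\sigma(\mu)w(x,y;\mu)\,d\mu$ with, say, $\supp\sigma\subset(0,(1+\alpha^2)^{-1})$. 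By the characterisation of differential solutions recalled after $(\ref{11})$ we have $E_\lambda h=U(x,y;\te_1;\sigma;\lambda)$ (in the notation of $(\ref{13})$, with $[\lambda_*,\lambda_{**}]\supset\supp\sigma$), so $(E_\lambda h,h)_1=\|U(x,y;\te_1;\sigma;\lambda)\|_1^2$. By Theorem~\ref{teo:odinn} (which rests on Lemma~\ref{lem:odin}) the function $\lambda\mapsto\|U(x,y;\te_1;\sigma;\lambda)\|_1$ is absolutely continuous on $[0,1]$; being also bounded and nondecreasing, its square is absolutely continuous as well, so $\mu_h$ is absolutely continuous, i.e.\ $h\in H_{ac}$. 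The case $\supp\sigma\subset((1+\alpha^2)^{-1},1)$ is identical, using the counterpart of Theorem~\ref{teo:odinn} for the functions $v$.

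Finally I would assemble the pieces. Taking $\lambda_1=0$, $\lambda_2=1$ in $(\ref{11})$ gives $Ah=\int_0^1\mu\,\sigma(\mu)\,w(x,y;\mu)\,d\mu$, and $\mu\mapsto\mu\,\sigma(\mu)$ again belongs to $C^1(0,1)$ with support in the same two intervals, so $Ah$---and, iterating, $p(A)h$ for every polynomial $p$---is again a generator of the type $(\ref{25})$. Hence $\mathcal L$ is $A$-invariant, and so is its closure $H_0$ ($A$ being bounded); as a closed invariant subspace of the selfadjoint operator $A$ it reduces $A$, so $A|_{H_0}$ is selfadjoint. By the previous paragraph $\mathcal L\subset H_{ac}$, and $H_{ac}$ being closed, $H_0\subset H_{ac}$; therefore $A|_{H_0}$ is the restriction of $A|_{H_{ac}}$ to a reducing subspace and thus again has purely absolutely continuous spectrum, i.e.\ $\sigma(A|_{H_0})=\sigma_{ac}(A|_{H_0})$. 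The only analytic input here is the absolute continuity of $\lambda\mapsto\|U(x,y;\te_1;\sigma;\lambda)\|_1$, which Theorem~\ref{teo:odinn} already delivers; the remaining obstacle---such as it is---is purely formal, namely checking that $H_{ac}$ is closed and that $H_0$ reduces $A$, both handled above.
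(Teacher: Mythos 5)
Your proposal is correct and follows essentially the route the paper intends: the paper leaves the corollary's proof implicit, and your argument — identify $E_\lambda h$ with the differential solution $U(\cdot;\lambda)$ via the characterisation after (\ref{11}), deduce absolute continuity of $(E_\lambda h,h)_1=\|U(\cdot;\lambda)\|_1^2$ from Theorem \ref{teo:odinn}/Lemma \ref{lem:odin}, and pass to the closed span using the closedness and reducing property of the absolutely continuous subspace — is exactly the intended deduction, with the routine spectral-theoretic details (which the paper omits entirely) supplied. Your additional observation that $Ah=\int_0^1\mu\,\sigma(\mu)\,w(x,y;\mu)\,d\mu$ is again a generator of type (\ref{25}), so that $H_0$ is $A$-invariant and hence reducing, is a worthwhile point that the paper glosses over.
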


\begin{ex}\rm
In some works, the  class of  bounded convex domains
 $\Omega$ under consideration is defined by the condition that
$\G=\partial\Omega=\cup_{j=1}^n \G_j$, $\G_j\in C^4$, and
$\G_j$ is either a line segment  or  has
 positive curvature at every point, including the endpoints
(the value of curvature at the endpoints is understood as the limit of the curvature
at the interior points).
Let a set $\{\a_1,\dots,\a_m\}\ss \left[0,\frac \pi 2\right]$
consist of angles between the axis   $Ox$
and all one-side tangents  at the endpoints
$\G_j$, $j\in \{1,\dots,n\}$.
Let $0\le \a_1< \dots < \a_m\le \frac \pi 2$. Write
$$
\a=\a(\la)=\arccos (\sqrt{\la})\in \left[0,\frac \pi 2\right],
\qquad \la=\cos^2 \a,
$$
$$
\xi=\xi(X,\a)=x \sin \a + y \cos \a,\qquad
\eta=\eta(X,\a)=x \sin \a - y \cos \a,\qquad X=(x,y).
$$
Consider  domains of this kind for which
there are  $\a^*\in \{\a_1,\dots,\a_m\}$ and $X^1$, $X^2 \in \G$
such that
$$
\left\{%
\begin{array}{l}
  \xi(X^1,\a^*)= \xi(X^2,\a^*)   \mbox{ or } \eta(X^1,\a^*)= \eta(X^2,\a^*),\\
    \{X\in \R^2 \,|\, \xi(X,\a^*)=\xi(X^j,\a^*)\}\cap \Omega=
    \emptyset,\quad j=1,2, \\
     \{X\in \R^2 \,|\, \eta(X,\a^*)=\eta(X^j,\a^*)\}\cap \Omega=
    \emptyset,\quad j=1,2. \\
\end{array}%
\right.
$$
As it was noted above, the number  $\la\in (0,1)$
is an eigenvalue of the operator  $A$ corresponding to
the domain $\Omega$ if and only if the hyperbolic equation
(\ref{10}) has a nontrivial generalized solution in
$\Wp(\Omega)$.

In particular, let  $\Omega$ be the quadrangle with
vertices at the points $O(0,0)$, $A\left(\frac 13,\frac 13\right)$,
 $B\left(\frac 12,1\right)$ è $C(0,1)$. Then $n=m=4$ and
the above angles are: $0$, $\pi/4$, $\arctan 4$, $\pi/2$.
Let
 $\a^*=0$, $X^1=C$, $X^2=B$. Then $\xi(X^1,\a^*)=
 \xi(X^2,\a^*)$ and, as can readily be seen, the four
 lines
$\xi(X,\a^*)=\xi(X^j,\a^*)$, $\eta(X,\a^*)=\eta(X^j,\a^*)$,
$j=1,2,$ do not intersect the domain $\Omega$.
Thus, the domain in question belongs to the class of domains
described above.

It can immediately be proved  that the function
$$
u(x,y):=C^*\cdot
\left\{%
\begin{array}{ll}
    x, & \hbox{for $(x,y)\in \triangle COM$;} \\
    1-y, & \hbox{for $(x,y)\in \triangle CMB$;} \\
    y-x, & \hbox{for $(x,y)\in \triangle OMA$;} \\
    \frac 12 (y+1)-2x, & \hbox{for $(x,y)\in \triangle MAB$;} \\
\end{array}%
\right.
$$
where $M$ has coordinates $\left(\frac 14,\frac 12\right)$ and
$C^*\ne 0$ is an arbitrary constant, belongs to the space
$\Wp(\Omega)$ and is the generalized solution of the
Dirichlet problem for equation  (\ref{10}) for
 $\la=\cos^2(\arctan 2)=\frac 15$.
Thus, $\la=\frac 15$ is an eigenvalue of the operator $A$ corresponding
to the domain  $\Omega$.
\end{ex}

\section{Solutions of the nonstationary Sobolev equation}\label{sec:vtoraja}

Using the  differential solutions
$U(x,y;\lambda)$ of the form (\ref{25}), which are constructed
above, we can write out some exact solutions
of the nonstationary Sobolev equation. Namely,
the following assertion is an immediate corollary to
Theorem \ref{teo:odinn}.
\begin{teo}\label{teo:dvva}
Let
 \begin{equation}\label{26}
p_0(x,y):=
\int\limits_{0}^{1}
\sigma_0(\lambda)\,w_0(x,y;\lambda)\,d\lambda,\quad
p_1(x,y):=
\int\limits_{0}^{1}
\sigma_1(\lambda)\,w_1(x,y;\lambda)\,d\lambda,
 \end{equation}
where $w_0,\,w_1$ are constructed in the  way described above
 from functions
$\te^{(0)}_1,$ $\te^{(0)}_2$, $\te^{(1)}_1$, $\te^{(1)}_2$,
respectively, and the supports $\supp \sigma_i $
are contained in the union of the intervals
$(0,(1+\alpha^2)^{-1} )$ and $((1+\alpha^2)^{-1},1)$.
Then the function
 \begin{equation}\label{27}
p(x,y;t):=
\int\limits_{0}^{1}   \cos(\sqrt{\lambda}\; t)
\,\sigma_0(\lambda)\,w_0(x,y;\lambda)\,d\lambda
+
\int\limits_{0}^{1}   \frac{\sin(\sqrt{\lambda}\;t)}{\sqrt{\lambda}}\,
\sigma_1(\lambda)\,w_1(x,y;\lambda)\,d\lambda,
 \end{equation}
is a solution to  problem $(\ref{1})-(\ref{3})$
and  $\|p(x,y;t)\|_{L_2(D)}\to 0$ as $t\to \infty$.
\end{teo}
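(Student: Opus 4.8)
The plan is to verify that the function $p(x,y;t)$ defined by $(\ref{27})$ is a generalized solution of $(\ref{1})$--$(\ref{3})$ in the sense of $(\ref{5})$, and then to prove the decay statement. For the first part, I would recognize that $(\ref{27})$ is exactly the abstract d'Alembert formula $p(t)=\cos(\sqrt{A}\,t)\,p_0+\frac{\sin(\sqrt{A}\,t)}{\sqrt{A}}\,p_1$ applied to the Cauchy problem $(\ref{7})$, read through the spectral decomposition of $A$. Concretely, by Corollary \ref{cor:odin-tri} the functions $U_i(x,y;\lambda)=\int_0^\lambda\sigma_i(\mu)w_i(x,y;\mu)\,d\mu$ are differential solutions of $(\ref{9})$, so $U_i(\lambda)=E_\lambda h_i$ for some $h_i\in\Wp(D)$, and $p_i=\int_0^1 dU_i(\lambda)=h_i$. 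Then $p(t)=\int_0^1\cos(\sqrt{\lambda}\,t)\,dE_\lambda h_0+\int_0^1\frac{\sin(\sqrt{\lambda}\,t)}{\sqrt{\lambda}}\,dE_\lambda h_1$; one checks by differentiating under the integral sign (justified since the measures $d(E_\lambda h_i,h_i)_1$ are absolutely continuous with $L_2$-bounded densities by Lemma \ref{lem:odin}) that $p''(t)=-\int_0^1\lambda\cos(\sqrt{\lambda}t)\,dE_\lambda h_0-\int_0^1\sqrt{\lambda}\sin(\sqrt{\lambda}t)\,dE_\lambda h_1=-A\,p(t)$, and that $p(0)=h_0=p_0$, $p'(0)=h_1=p_1$. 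Translating back through the operator $A$ via $(\ref{6})$ gives exactly the integral identity $(\ref{5})$ and the initial conditions $(\ref{3})$; alternatively one can verify $(\ref{5})$ directly by the same interchange-of-integration argument as in the proof of Theorem \ref{teo:odinn}, using that each $w_i(x,y;\lambda)$ is a generalized solution of $(\ref{10})$ on $D\cap\{x>\varepsilon\}$ and $D\cap\{y<1-\varepsilon\}$.

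For the decay statement, the key is that the spectral measures of $p_0$ and $p_1$ are absolutely continuous. By Corollary \ref{cor:odin-chetyre}, $h_0,h_1\in H_0$ and $\sigma(A|_{H_0})$ is absolutely continuous, so writing $d(E_\lambda h_i,h_i)_1=\rho_i(\lambda)\,d\lambda$ with $\rho_i\in L_1(0,1)$, we have
$$
\|p(t)\|_1^2 = \int_0^1\left|\cos(\sqrt{\lambda}\,t)\right|^2\rho_0(\lambda)\,d\lambda + \int_0^1\left|\frac{\sin(\sqrt{\lambda}\,t)}{\sqrt{\lambda}}\right|^2\rho_1(\lambda)\,d\lambda + 2\,\mathrm{Re}\int_0^1 \cos(\sqrt{\lambda}t)\,\overline{\frac{\sin(\sqrt{\lambda}t)}{\sqrt{\lambda}}}\,d(E_\lambda h_0,h_1)_1 .
$$
Actually the cleanest route is to observe that $t\mapsto p(t)$ is itself a differential-solution-type integral whose spectral density is controlled: the function $\lambda\mapsto (E_\lambda p(t),p(t))_1$ has density $\cos^2(\sqrt{\lambda}t)\rho_0(\lambda)+\dots$, uniformly bounded in $L_1$, and since $\supp\sigma_i$ stays away from $0$ and from $(1+\alpha^2)^{-1}$, these are genuine finite measures with no atoms. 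Because each $w_i(\cdot;\lambda)$ depends on $\lambda$ smoothly away from the singular point (properties c), d)), I claim $\rho_0,\rho_1$ are in fact continuous, even $C^1$, on each of the two subintervals. Then $\|p(t)\|_{L_2(D)}\to 0$ follows from the Riemann--Lebesgue lemma: after the substitution $\mu=\sqrt{\lambda}$, the quantity $\|p(t)\|_{L_2(D)}^2$ is a finite sum of Fourier-type integrals $\int g(\mu)e^{\pm 2i\mu t}\,d\mu$ with $g\in L_1$, which tend to $0$. One must be slightly careful that the $L_2(D)$-norm, not the $\Wp$-norm, appears; but since $p(t)\in\Wp(D)$ is uniformly bounded in $\Wp$ and $\Wp(D)\hookrightarrow L_2(D)$ compactly (the triangle $D$ is bounded with the cone property), weak convergence $p(t)\rightharpoonup 0$ in $\Wp(D)$—which is what Riemann--Lebesgue gives componentwise against the eigen-resolution—upgrades to strong convergence in $L_2(D)$.

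The main obstacle I anticipate is the rigorous justification of differentiating $p(t)$ twice under the spectral integral and of the interchange of $\int_D$ and $\int_\lambda$ needed to land in $W^{-1}_2(D)$: this requires the uniform-in-$\lambda$ bounds on $\|w_i(\cdot;\lambda)\|_{W^1_2(D\cap\{x>\varepsilon\})}$ and on the boundary traces, all of which are exactly the estimates packaged in properties a)--d) and inequality $(\ref{ocV})$ of Lemma \ref{lem:odin}. Once those bounds are invoked the argument is the same line-by-line manipulation as in the proof of Theorem \ref{teo:odinn}, merely with the extra bounded factors $\cos(\sqrt{\lambda}t)$ and $\sin(\sqrt{\lambda}t)/\sqrt{\lambda}$ carried along, so I would state it as a corollary and spend most of the written proof on the Riemann--Lebesgue decay step, which is where the genuinely new input (absolute continuity of the spectrum on $H_0$, Corollary \ref{cor:odin-chetyre}) is used.
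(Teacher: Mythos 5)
Your proposal is correct and matches what the paper intends: Theorem \ref{teo:dvva} is stated there without proof as an ``immediate corollary'' of Theorem \ref{teo:odinn}, namely the identification of (\ref{27}) with the spectral formula $\cos(\sqrt{A}\,t)\,p_0+A^{-1/2}\sin(\sqrt{A}\,t)\,p_1$ through the differential solutions of Corollary \ref{cor:odin-tri}, with the decay coming from the absolute continuity of the spectrum on $H_0$ (Corollary \ref{cor:odin-chetyre}). Your treatment of the $L_2(D)$-versus-$\Wp(D)$ issue (Riemann--Lebesgue applied to the absolutely continuous spectral measures gives weak convergence to zero in $\Wp(D)$, upgraded to strong $L_2(D)$ convergence by the compact embedding) is a correct completion of the one step the paper leaves implicit --- the paper itself only records the pointwise convergence $p(x,y;t)\to 0$ via the absolute continuity of $\lambda\mapsto w(x,y;\lambda)$ --- and your side remark that the $\Wp$-norm itself does not decay is accurate.
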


Note that here we have $p_0(x,y),\,\, p_1(x,y)\in H_0$.

It follows from the results of the papers \cite{Tro94},
\cite{Tro98}, and \cite{Tro99} that, for any point
$(x,y)\in D$, the function $w(x,y;\lambda)$ is an absolutely
continuous function of the variable
$\lambda$ on the closed interval  $[0,\,1]$ (see the property c)
of the function $u(x,y;\te_1;\lambda)$), and therefore
$$
p(x,y;t)\to 0\quad (t\to \infty )
$$
for any point $(x,y)\in D$.

\begin{teo}\label{teo:trrri}
Let
$\te^{(0)}_1,\,\te^{(1)}_1\in C_0^{\infty}[0,1]$,
$\te^{(0)}_2,\,\te^{(1)}_2\in C_0^{\infty}[0,1/{\alpha}]$,
 $\sigma_i(\lambda)\in C_0^{\infty}[0,1]$,
let the supports $\supp \sigma_i $ are contained in the
union of the intervals
$(0,(1+\alpha^2)^{-1} )$ and $((1+\alpha^2)^{-1},1)$, $i=0,1$.
In this case,  for any $n=1,2,...$, there exist a constant ${\cal K}_n$
independent of $t$ and such that the bound
\begin{equation}\label{oc}
\|p(x,y;t)\|_{L_2(D)}\leq \frac {{\cal K}_n}{t^n}
\end{equation}
holds for any $t \in (0,\,\infty)$,
where $p(x,y;t)$ stands for  the solution of  problem
$(\ref{1})-(\ref{3})$ of the form $(\ref{27})$.
\end{teo}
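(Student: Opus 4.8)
The plan is to read off rapid decay from the oscillatory factors $\cos(\sqrt\lambda\,t)$ and $\sin(\sqrt\lambda\,t)/\sqrt\lambda$ in $(\ref{27})$ by repeated integration by parts in the spectral variable, each integration gaining a power of $t^{-1}$; the key input is that, because $\supp\sigma_i$ avoids the endpoints $0$, $1$ and the degenerate value $(1+\alpha^2)^{-1}$, the integrands depend smoothly on $\lambda$ with values in $L_2(D)$. First I would substitute $s=\sqrt\lambda$. Since $d\lambda=2s\,ds$, the two integrals in $(\ref{27})$ become $\int_0^1\cos(st)\,g_0(s)\,ds$ and $2\int_0^1\sin(st)\,g_1(s)\,ds$, where $g_0(s):=2s\,\sigma_0(s^2)\,w_0(\,\cdot\,;s^2)$ and $g_1(s):=\sigma_1(s^2)\,w_1(\,\cdot\,;s^2)$ are $L_2(D)$-valued functions of $s\in[0,1]$ whose supports are compact subsets of $(0,1)$ not containing the point $s=(1+\alpha^2)^{-1/2}$.

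The decisive step, which I expect to be the main obstacle, is the following regularity claim: for data $\te^{(i)}_1\in C_0^\infty[0,1]$, $\te^{(i)}_2\in C_0^\infty[0,1/\alpha]$, the maps $\lambda\mapsto w_0(\,\cdot\,;\lambda)$ and $\lambda\mapsto w_1(\,\cdot\,;\lambda)$ are infinitely differentiable into $L_2(D)$ on each of the intervals $(0,(1+\alpha^2)^{-1})$ and $((1+\alpha^2)^{-1},1)$, with all $\lambda$-derivatives bounded in $L_2(D)$-norm on compact subintervals. Granting this, $g_0,g_1\in C_0^\infty\bigl((0,1);L_2(D)\bigr)$, because the products $\sigma_i(s^2)w_i(\,\cdot\,;s^2)$ and all their $s$-derivatives vanish near $0$, $1$ and $(1+\alpha^2)^{-1/2}$. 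One may then integrate by parts $2n$ times in the Bochner integrals; the boundary terms vanish by compact support, and every two integrations contribute a factor $t^{-2}$, so that
\[
\left\|\int_0^1\cos(st)\,g_0(s)\,ds\right\|_{L_2(D)}\le t^{-2n}\int_0^1\bigl\|g_0^{(2n)}(s)\bigr\|_{L_2(D)}\,ds,
\]
and likewise for the sine integral with $g_1$ in place of $g_0$. Hence $\|p(\,\cdot\,;t)\|_{L_2(D)}\le \mathrm{const}\cdot t^{-2n}$ for $t\ge1$; for $0<t\le1$ the bound $(\ref{oc})$ follows after enlarging the constant, since $t\mapsto p(\,\cdot\,;t)$ is continuous into $L_2(D)$ (it is a solution of the form $(\ref{27})$, hence twice continuously differentiable in $t$) and $t^{-n}\ge1$ there. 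Replacing $2n$ by $n$ yields $(\ref{oc})$.

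It remains to prove the regularity claim. Because $\supp\sigma_i$ avoids $\lambda=(1+\alpha^2)^{-1}$, each of $w_0,w_1$ contributes only through its branch $u(\,\cdot\,;\te_1;\lambda)$ on $(0,(1+\alpha^2)^{-1})$ and its branch $v(\,\cdot\,;\te_2;\lambda)$ on $((1+\alpha^2)^{-1},1)$, and the latter is treated in complete analogy with the former. For the $u$-branch I would differentiate in $\lambda$ the recursive construction $(\ref{12})$: the Riemann representation $(\ref{13,5})$, the quantities $P,Q,l,\mu$ of $(\ref{P,Q})$--$(\ref{l(mu)})$, and the Darboux-type continuations all depend on $\lambda$ through functions that are real-analytic on $(0,(1+\alpha^2)^{-1})$ (with $l(\lambda)>1$ there), while $\lambda$-differentiation of the data $\te_1=\te^{(i)}_1$ again gives functions in $C_0^\infty$. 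Thus $\partial_\lambda^k u$ is produced by the same scheme as $u$, with $\te_1$ replaced by its $\lambda$-derivatives and with additional smooth weights, and one re-runs the estimates underlying Lemma $\ref{lem:odin}$ — the expansion over the regions $R_k$ shrinking to the vertex $O$ and the integration by parts in $l$ leading to $(\ref{v'x1})$ — now for $\partial_\lambda^k u$ rather than $V$. The only new feature is that each $\lambda$-differentiation brings down an extra factor growing at most polynomially in $k$ (essentially from differentiating the powers $l^{k}$ that appear on $R_k$); this is harmless, since the geometric factor $l_1^{2k}S(R_k)$ present in Lemma $\ref{lem:odin}$ stays bounded and dominates any power of $k$, so the series over $k$ still converges and furnishes the required $L_2(D)$-bound for $\partial_\lambda^k u$, uniform on compact subintervals. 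This establishes the claim and completes the proof.
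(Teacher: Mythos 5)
Your overall architecture coincides with the paper's: reduce to the case $\supp\sigma_0\subset[\lambda_*,\lambda_{**}]\subset(0,(1+\alpha^2)^{-1})$, substitute $\nu=\sqrt{\lambda}$, integrate by parts in the spectral variable, and reduce the bound (\ref{oc}) to square-integrability of the $\lambda$-derivatives $\d_\lambda^n u(x,y;\lambda)$ over $D\times[\lambda_*,\lambda_{**}]$. You also correctly locate the difficulty at the vertex $O$, where the reflected characteristics accumulate, and correctly anticipate the key quantitative input, which is precisely the paper's Lemma \ref{lem:dva-tri}: on the $k$-th strip one has $|\d^n\widetilde{\f}/\d l^n|\le C(n)\,k^n\,l^k$.

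However, the step where you conclude convergence of the series over $k$ is wrong as stated, and this is the quantitative heart of the proof. You argue that the polynomial growth in $k$ coming from differentiating the powers $l^k$ is harmless ``since the geometric factor $l_1^{2k}S(R_k)$ stays bounded and dominates any power of $k$.'' But $S(R_k)\sim c\,l_1^{-2k}$, so $l_1^{2k}S(R_k)$ tends to a positive constant; a bounded factor cannot dominate $k^m$, and $\sum_k k^m\cdot O(1)$ diverges. The actual mechanism is different: in the Riemann representation (\ref{13,5}) the integration is over the chord $[P,Q]$, whose length on $R_k$ is $O(l_1^{-k})$; by Cauchy--Schwarz this chord length multiplies the square of the bound $C(n)k^nl^k$, and a further count of the number $N(\widetilde x,k)<\widehat\gamma k$ of strips met by the line $x=\widetilde x$ yields a per-term bound of order $k^{2n+1}S(R_k)$ for the series (\ref{sum}), so that genuine geometric decay $l_1^{-2k}$ survives and beats the polynomial. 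Without tracking this cancellation between the $l^{k}$ growth of $\d^n\widetilde{\f}/\d l^n$ and the $l^{-k}$ chord length, your series does not converge, so the regularity claim is not established by your argument. A secondary inaccuracy: $\d_\lambda^n u$ is not ``produced by the same scheme as $u$ with $\te_1$ replaced by its $\lambda$-derivatives''---$\te_1$ does not depend on $\lambda$; the $\lambda$-dependence enters entirely through the reflection geometry, and the paper controls it via the self-similarity relation $\Phi(x/l,l)=l\,\Phi(x,l)$, which is what makes Lemma \ref{lem:dva-tri} work. Also, the machinery of Lemma \ref{lem:odin} (integration by parts in $l$ to exploit cancellation, needed for the $W^1_2$-bound on the $\lambda$-averaged function $V$) is not the right tool here: only $L_2(D)$-bounds on $\d_\lambda^n u$ are needed, and these come from the more direct chord-length estimate just described.
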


 \begin{proof}
Obviously, it  suffices to consider the case in which
$\te^{(1)}_1\equiv 0$,
$\te^{(0)}_2\equiv 0,\,\te^{(1)}_2\equiv 0$,
 $\sigma_1(\lambda)\equiv 0$,
 $\supp \sigma_0 \subset [\lambda_*,\,\lambda_{**}]
\subset (0,(1+\alpha^2)^{-1} )$, i.e.
\begin{equation}\label{28}
p(x,y;t):=
\int\limits_{0}^{1}   \cos(\sqrt{\lambda}\; t)\,
\sigma_0(\lambda)\;u(x,y;\lambda)\,d\lambda =
\int\limits_{0}^{(1+\alpha^2)^{-1}}   \cos(\sqrt{\lambda}\; t)\,
\sigma_0(\lambda)\;u(x,y;\lambda)\,d\lambda,
 \end{equation}
where $u(x,y;\lambda)=u(x,y;\te^{(0)}_1;\lambda)$.
Under the assumptions formulated in the theorem,
$u(x,y;\lambda)$ is infinitely  differentiable with respect to
 $x,y,\lambda$ everywhere inside the prisme
$D\times (0,(1+\alpha^2)^{-1})$
and can be continuously extended with all its derivatives,
to the entire surface of the prisme, except for
the line $x=0$, $y=0$, and the corresponding function
$\widetilde{\f}(x,l)$ (see (\ref{phi}))
is infinitely  differentiable with respect to
$x,l$ everywhere inside the half-strip
 $(0,1/{\alpha})\times (1,\infty)$.
Let  $l_1:=l(\lambda_*)$, $l_{2}:=l(\lambda_{**})$. Then we obviously
have
$1<l_1 \leq l_{2}<+ \infty$.

 \begin{lem}\label{lem:dva-tri}
For any  $n=0,1,2,...,$
the following bounds hold:
\begin{equation}\label{ocphi}
\left|\frac {\d^n\widetilde{\f}}{\d l^n}\right|\leq
C(n)k^n\,l^k, \quad       \frac 1{\alpha l^{k+1}}<x\leq \frac 1{
\alpha l^{k}},
\quad l_1\leq l\leq l_{2},
\end{equation}
where $k=1,2,...,$ and the constant $C(n)$ does not depend on $k$
and $(x,l)$.
 \end{lem}

\begin{proof}
Since   $1<l_1 \leq l\leq l_{2}<+ \infty$, it follows that,  obviously, it suffices  to prove the
above bounds for the function
$\Phi(x,l):=\widetilde{\f}(x,l)\frac {l-1}{2\alpha l}$.

Let $-\frac 12\leq\tau < \frac 12$
and let
$$
C_{\tau}:= \Phi\left(
\frac 1{\alpha }
\left(\frac 12-\tau\right)+\frac 1{\alpha l_{*}}
\left(\frac 12+\tau\right),l_{*}\right).
$$
Introduce the functions
$$
\widehat{x}_{k,\tau}(l)=
\frac 1{\alpha l^k}\left(\frac 12-\tau\right)
+\frac 1{\alpha
l^{k+1}}\left(\frac 12+\tau\right),\qquad k=0,1,2,....
$$

Using (\ref{x(k,j)}) and (\ref{Phi}) it can readily be proved  that
\begin{eqnarray*}
1)\quad && \Phi\left(\frac {x}l,l\right)=l\Phi(x,l),\qquad (x,l)\in
(0,1/{\alpha})\times (1,\infty),\\
2)\quad && \left.\Phi(x,l)\right|_{x=
\widehat{x}_{k,\tau}(l)}=C_{\tau}l^k,
\qquad
k=0,1,2,...,
\end{eqnarray*}
which immediately implies the assertion  of the lemma for $n=0$.

Further, it follows from  property 1)  that
$$
\frac{\d\Phi}{\d x}
\left(\frac {x}l,l\right)=l^2\,\frac{\d\Phi}{\d x}
(x,l).
$$
Let
$$
M_1=\max\limits_{(x,l)\in \overline{I_1}}
\left|\frac{\d\Phi}{\d l}\right|,\quad\mbox{ where }
I_1:=\left\{(x,l)\,\left|\, \frac 1{\alpha l}< x\leq
\frac 1{\alpha },\,\,l_1\leq l\leq l_{2}\right.\right\}.
$$
Then the following bound holds in the
 $k$-th strip
$\displaystyle
I_k:=\left\{(x,y)\,\left|\, \frac 1{\alpha l^{k+1}}< x\leq\frac 1{
\alpha l^{k}},\quad l_1\leq l\leq l_{2}\right.\right\}$:
$$
\left|\frac{\d\Phi}{\d x}(x,l)\right|\leq M_1\,l^{2k}.
$$
Let $(x_0,l_0)$ be an arbitrary point of the
$k$-th strip, and let $\tau_0$ be such that
$$
x_0=\widehat{x}_{k,\tau_0}(l_0).
$$
Then, on  one hand,
$$
\left.\frac{d\Phi(\widehat{x}_{k,\tau_0}(l),\,l)}{dl}
\right|_{l_0}=C_{\tau_0}\,k\,l_0^{k-1},
$$
and, on the other hand,
$$
\frac{d\Phi(\widehat{x}_{k,\tau_0}(l),\,l)}{dl}=
\left.\left[\frac{\d\Phi}{\d x}\cdot\frac{d\widehat{x}_{k,\tau_0} }{d l}+
\frac{\d\Phi}{\d l}\right]\right|_{
x=\widehat{x}_{k,\tau_0}(l)},
$$
and therefore
$$
\left.\frac{d\Phi(\widehat{x}_{k,\tau_0}(l),\,l)}{dl}\right|_{l_0}=
\left.\frac{\d\Phi}{\d x}
\right|_{(x_0,l_0)}\cdot
\left(
\frac {-k}{\alpha l_0^{k+1}}\left(\frac 12-\tau_0\right)
+\frac {-k-1}{\alpha
l_0^{k+2}}\left(\frac 12+\tau_0\right)\right)
+
\left.\frac{\d\Phi}{\d l}\right|_{(x_0,l_0)}.
$$
Hence, for any point  $(x_0,l_0)\in I_k$,
$$
\left|\left.\frac{\d\Phi}{\d l}\right|_{(x_0,l_0)}\right|=
\left|C_{\tau_0}\,k\,l_0^{k-1}+
\left.\frac{\d\Phi}{\d x}
\right|_{(x_0,l_0)}\cdot
\left(
\frac {k}{\alpha l_0^{k+1}}\left(\frac 12-\tau_0\right)
+\frac {k+1}{\alpha
l_0^{k+2}}\left(\frac 12+\tau_0\right)\right)\right|\leq
$$
$$
\leq\left|C_{\tau_0}\,k\,l_0^{k-1}\right|+
M_1\,l_0^{2k}
\cdot \left|
\frac {k}{\alpha l_0^{k+1}}
+\frac {k+1}{\alpha
l_0^{k+2}}\right|\leq  C(1)\,k\,l_0^k.
$$
which implies the assertion of the lemma for $n=1$.
To obtain the desired bound for $n=2$,
it  suffices to note that
$$
\frac{\d^2\Phi}{\d x^2}
\left(\frac {x}l,l\right)=l^3\,\frac{\d^2\Phi}{\d x^2}(x,l)
$$
and to consider  the second derivative of
the function
$\Phi(\widehat{x}_{k,\tau_0}(l),\,l)$, respectively,  and so on.
\end{proof}

%%%%%%%%%%%%%%%%%%%%%%%%%%%%%%%%%%%%%%%%%%%%%
%%%%%%%%%%%%%%%%%%%%%%%%%%%%%%%%%%%%%%%%%%%

Let us prove now the estimation (\ref{oc}) for $n=1$. We have
$$
p(x,y;t)=
\int\limits_{0}^{1}   \cos(\sqrt{\lambda}\; t)\,
\sigma_0(\lambda)\;u(x,y;\lambda)\,d\lambda
=\int\limits_{0}^{1}  \cos (\nu t)\, 2\nu\;
\sigma_0(\nu^2)\;u(x,y;\nu^2)\,d\nu =
$$
\begin{equation}\label{30}
=-\frac 1t \int\limits_{0}^{1}\sin(\nu t) \;\left(2\nu
\sigma_0(\nu^2)\right)'_{\nu}u(x,y;\nu^2)\,d\nu
-\frac 1t \int\limits_{0}^{1}\sin (\nu t)\, 2\nu\;
\sigma_0(\nu^2)\;u'_{\nu}(x,y;\nu^2) \,d\nu.
\end{equation}
We claim that  the $L_2$-norm of the first integral
is bounded,
$$
\int\limits_{D}\!\!\!\int\left\{
\int\limits_{0}^{1}\sin(\nu t)\; \left(2\nu
\sigma_0(\nu^2)\right)'_{\nu}u(x,y;\nu^2)d\nu\right\}^2dx\,dy\leq
$$
$$
\leq\int\limits_{D}\!\!\!\int
\int\limits_{0}^{1}\left| \left(2\nu\;
\sigma_0(\nu^2)\right)'_{\nu}
u(x,y;\nu^2)\right|^2 d\nu\,dx\,dy
=\int\limits_{0}^{1}\int\limits_{D}\!\!\!\int
\left| \left(2\nu\;
\sigma_0(\nu^2)\right)'_{\nu}
u(x,y;\nu^2)\right|^2 dx dy\,d\nu \leq K_1,
$$
where $K_1$ does not depend on $t$.
Further, for the second integral in (\ref{30}), we obtain
$$
\int\limits_{D}\!\!\!\int
\left| \int\limits_{0}^{1}\sin(\nu t)\, 2\nu\;
\sigma_0(\nu^2)\;u'_{\nu}(x,y;\nu^2) d\nu\right|^2 dx \,dy=
$$
$$
=
\int\limits_{D}\!\!\!\int
\left| \int\limits_{{\lambda_*}
}^{{\lambda_{**}}}\sin (\sqrt{\mu}\; t)\,
\sigma_0(\mu)\;u'_{\mu}(x,y;\mu)\;
2\sqrt{\mu}\;
d\mu\right|^2 dx\, dy\leq
$$
$$
\leq ({\lambda_{**}}-{\lambda_{*}})
\int\limits_{D}\!\!\!\int
 \int\limits_{{\lambda_*}
}^{{\lambda_{**}}}\left|
\sigma_0(\mu)\;u'_{\mu}(x,y;\mu)\;
2\sqrt{\mu}\;
\right|^2 \,d\mu \,dx\, dy=
$$
$$
\leq ({\lambda_{**}}-{\lambda_{*}})
\int\limits_{{\cal D}_{**}}\!\!\!\int
 \int\limits_{{\lambda_*}
}^{{\lambda_{**}}}\left|
\sigma_0(\mu)\;u'_{\mu}(x,y;\mu)\;
2\sqrt{\mu}
\right|^2 \,d\mu \,dx \,dy+
$$
$$
+ ({\lambda_{**}}-{\lambda_{*}})
\int\limits_{D\setminus\overline{{\cal D}_{**}}}\!\!\!\int
 \int\limits_{{\lambda_*}
}^{{\lambda_{**}}}\left|
\sigma_0(\mu)\;u'_{\mu}(x,y;\mu)\;
2\sqrt{\mu}
\right|^2 \,d\mu\, dx \,dy,
$$
where
${\cal D}_{**}:=D\cap \{a_{**}y>x+a_{**}-\frac 1{\alpha}\}$ for
 $a_{**}=\sqrt{\frac{\lambda_{**}}{1-\lambda_{**}}}$.
The existence of the last summand  is obvious,
and therefore we study the first summand by
 using the representation
 (\ref{13,5})  of the function $u(x,y;\mu)$ in the domain
${\cal D}_{**}$:
$$
\int\limits_{{\cal D}_{**}}\!\!\!\int
\int\limits_{{\lambda_*} }^{{\lambda_{**}}}\left|
\sigma_0(\mu)\,u'_{\mu}(x,y;\mu)\;
2\sqrt{\mu}
\right|^2 \,d\mu\, dx \,dy\leq
$$
$$
\leq 4\int\limits_{{\cal D}_{**}}\!\!\!\int
 \int\limits_{{\lambda_*}
}^{{\lambda_{**}}}\left|
\sigma_0(\mu)\;
2\sqrt{\mu}
\left(\frac {\sqrt{\mu}}{2\sqrt{1-\mu}}\right)'_{\!\!\mu}
\;
\int\limits_{P(x,y;\mu)}^{Q(x,y;\mu)}
\f(x',\mu)\,dx'
\right|^2 \,d\mu\, dx\, dy+
$$
$$
+4\int\limits_{{\cal D}_{**}}\!\!\!\int
 \int\limits_{{\lambda_*}
}^{{\lambda_{**}}}\left|
\sigma_0(\mu)\;
\frac {{\mu}}{\sqrt{1-\mu}}
\;Q'_{\mu}(x,y;\mu)\,
\f(Q(x,y;\mu),\mu)
\right|^2 \,d\mu\, dx\, dy+
$$
$$
+4\int\limits_{{\cal D}_{**}}\!\!\!\int
 \int\limits_{{\lambda_*}
}^{{\lambda_{**}}}\left|
\sigma_0(\mu)\;
\frac {{\mu}}{\sqrt{1-\mu}}
\,P'_{\mu}(x,y;\mu)\,
\f(P(x,y;\mu),\mu)
\right|^2 \,d\mu\, dx \,dy+
$$
\begin{equation}\label{me}
+4\int\limits_{{\cal D}_{**}}\!\!\!\int
 \int\limits_{{\lambda_*}
}^{{\lambda_{**}}}\left|
\sigma_0(\mu)\;
\frac {{\mu}}{\sqrt{1-\mu}}\;
\int\limits_{P(x,y;\mu)}^{Q(x,y;\mu)}\;
\f'_{\mu}(\widetilde x,\mu)\,d\widetilde x
\right|^2 \,d\mu\, dx\, dy.
\end{equation}

Let us prove the existence of the last integral.
To this end, it  suffices to prove
the convergence of the series
\begin{equation} \label{sum}
\sum\limits_{k=k_0}^{+\infty}
\int\!\!\!\!\int\limits_{R_k}
 \int\limits_{\lambda_{*}}^{\lambda_{**}}\left|\;
\int\limits_{P(x,y;\mu)}^{Q(x,y;\mu)} \f'_{\mu}(\widetilde x,\mu)
\,d\widetilde x\right|^2 d\mu \,dx\,dy,
\end{equation}
where the integrals are taken over trapezium
$$
R_k:=\left\{(x,y)\,\left|\,\frac 1{l_1^{k+1}}<x<\frac 1{l_1^{k}},\,\,0<y<
\alpha x \right.\right\}
$$
and $k_0$ is sufficiently large.
Passing to the new  variable of integration $l=l(\mu)$,
we obtain
$$
\int\limits_{\lambda_{*}}^{\lambda_{**}}\left|\;
\int\limits_{P(x,y;\mu)}^{Q(x,y;\mu)} \f'_{\mu}(\widetilde x,\mu)\,
d\widetilde x\right|^2 \,d\mu \leq
K_2 \int\limits_{l_{1}}^{l_{2}}\left|\;
\int\limits_{\widetilde{P}(x,y;l)}^{\widetilde{Q}(x,y;l)}
\widetilde {\f}'_{l}(\widetilde x,l)\,d\widetilde x\right|^2 \,dl\leq
$$
$$
\leq K_{2} \int\limits_{l_{1}}^{l_{2}}
\left|\widetilde{P}(x,y;l)-\widetilde{Q}(x,y;l)\right|
\int\limits_{\widetilde{P}(x,y;l)}^{\widetilde{Q}(x,y;l)}
|\widetilde{\f}'_{l}(\widetilde x,l)|^2\,d\widetilde x\,
dl,
$$
where $K_2$  stands for a positive constant depending  on
$\lambda_{*}$ and $\lambda_{**}$ only.
Let
$$
P_k:=\max\limits_{(x,y)\in R_k,\,l\in [l_1,l_2]} {\widetilde{P}(x,y;l)}
,\quad
Q_k:=\min\limits_{(x,y)\in R_k,\,l\in [l_1,l_2]} {\widetilde{Q}(x,y;l)}.
$$
It can readily be seen that
\begin{equation}\label{P_k,Q_k}
P_k=\frac{l_2+1}{2l_1^k},\quad
Q_k=\frac{l_2+1}{2l_1^{k+1}l_2},
\end{equation}
and therefore, for $(x,y)\in R_k$,
\begin{equation}\label{F}
\int\limits_{l_{1}}^{l_{2}}
\left|\widetilde{P}(x,y;l)-\widetilde{Q}(x,y;l)\right|
\int\limits_{\widetilde{P}(x,y;l)}^{\widetilde{Q}(x,y;l)}
|\widetilde{\f}'_{l}(\widetilde x,l)|^2\,d\widetilde x\,dl
\leq \frac {K_{3}}{l_1^k}
\int\limits_{Q_k}^{P_k}
\left\{\int\limits_{l_{1}}^{l_{2}}
|\f'_{l}(\widetilde x,l)|^2\,dl\right\}d\widetilde x.
\end{equation}
where $K_3$  stands for some positive constant depending  on
$\lambda_{*}$ and $\lambda_{**}$ only. Moreover, it also follows
from (\ref{P_k,Q_k}) that there are
positive integers $m_0$ and $r_0$ independent of $k$ and
such that
$$
l_1^{-(k+m_0)}\leq Q_k<P_k\leq l_1^{-(k-r_0)}
$$
for any $k$. Therefore, denoting the integrand
on the right-hand side of (\ref{F}) by ${\cal F}$, we see that
${\cal F}$ is nonnegative and
$$
\int\limits_{Q_k}^{P_k}{\cal F} d\widetilde x
\leq
\int\limits_{l_1^{-(k+m_0)}}^{l_1^{-(k+m_0-1)}}{\cal F}d\widetilde x+
\int\limits_{l_1^{-(k+m_0-1)}}^{l_1^{-(k+m_0-2)}}{\cal F}d\widetilde x+ ...+
\int\limits_{l_1^{-(k-r_0+1)}}^{l_1^{-(k-r_0)}}{\cal F}d\widetilde x.
$$
Then
$$
\sum\limits_{k=k_0}^{+\infty}
\int\!\!\!\!\int\limits_{R_k}
 \int\limits_{\lambda_{*}}^{\lambda_{**}}\left|
\int\limits_{P(x,y;\mu)}^{Q(x,y;\mu)} \f'_{\mu}(\widetilde x,\mu)
d\widetilde x\right|^2
\,d\mu \,dx\,dy\leq
$$
\begin{equation}\label{sum1}
\leq
(m_0+r_0+1)K_{3}\sum\limits_{k=k_0}^{+\infty}
 \frac {S_k}{l_1^k}
\int\limits_{l_1^{-(k+1)} }^{l_1^{-k}}\left\{
\int\limits_{l_{1}}^{l_{2}}
|\widetilde{\f}'_{l}(\widetilde x,l)|^2\,dl\right\} d\widetilde x,
\end{equation}
where $S_k$ stands for the area of $R_k$.

Let  $\widetilde x\in [l_1^{-(k+1)}, l_1^{-k}]$.
Denote by
$ l_{\widetilde x,i}$ the values of $l$
corresponding to the  points of intersection of the line
$x=\widetilde x$ with the curves
$x=\frac 1{l^{k-i+1}}$, $i=0,1,2,...$,
$$
\widetilde x=\frac 1{( l_{\widetilde x,0})^{k+1}}=\frac 1
{( l_{\widetilde x,1})^{k}}=
\frac 1{( l_{\widetilde x,2})^{k-1}}=\dots =
\frac 1{( l_{\widetilde x,i+1})^{k-i}}=\dots ,
$$
and by $N(\widetilde x,k)$ the number of intersections which
correspond to $ l_{\widetilde x,i}$ belonging to the  interval
$[l_1,l_2]$.
It can readily be seen  that
$N(\widetilde x,k)<\widehat \gamma k$ for sufficiently
large $k$, where
$0<\widehat \gamma<1$ does not depend on $k$ and $\widetilde x$.
Here we obviously have
$ l_{\widetilde x,0}\leq l_1\leq l_{\widetilde x,1}$ and
$$
\int\limits_{l_1^{-(k+1)} }^{l_1^{-k}}d\widetilde x
\int\limits_{l_{1}}^{l_{2}}
|\widetilde{\f}'_l(\widetilde x,l)|^2\,dl\leq
\int\limits_{l_1^{-(k+1)} }^{l_1^{-k}}\left\{
\sum\limits_{i=0}^{
N(\widetilde x,k)+1 }  \int\limits_{l_{\widetilde x,i} }
^{l_{\widetilde x,i+1}}
|\widetilde{\f}'_l(\widetilde x,l)|^2\,dl\right\}d\widetilde x.
$$
Let us now use the bound (\ref{ocphi}) for $(n=1)$. In this case,
$$
\int\limits_{l_1^{-(k+1)} }^{l_1^{-k}}\left\{
\sum\limits_{i=0}^{
N(\widetilde x,k)+1 }  \int\limits_{l_{\widetilde x,i} }
^{l_{\widetilde x,i+1}}
|\widetilde{\f}'_l(\widetilde x,l)|^2\,dl\right\}d\widetilde x
$$
$$
\leq
\int\limits_{l_1^{-(k+1)} }^{l_1^{-k}}\left\{
\sum\limits_{i=0}^{
N(\widetilde x,k)+1 }  \int\limits_{l_{\widetilde x,i} }
^{l_{\widetilde x,i+1}}
C^2(1)\,(k-i)^2\,l^{2(k-i)}\,dl\right\}d\widetilde x
$$
$$
\leq k^2\, C^2(1)
\int\limits_{l_1^{-(k+1)} }^{l_1^{-k}}\left\{
\sum\limits_{i=0}^{ N(\widetilde x,k)+1 }  (l_{\widetilde x,i+1})^{2(k-i)}
({l_{\widetilde x,i+1}}-{l_{\widetilde x,i} })\right\}d\widetilde x
$$
$$
 \leq k^2\, C^2(1)
\int\limits_{l_1^{-(k+1)} }^{l_1^{-k}}\left\{
( N(\widetilde x,k)+2)\,(l_{\widetilde x,0})^{2(k+1)}\right\}d\widetilde x
 \leq   k^2\, C^2(1)
\int\limits_{l_1^{-(k+1)} }^{l_1^{-k}}\left\{
(\widehat \gamma k+2)l_1^{2(k+1)}\right\}\,d\widetilde x
$$
$$
 \leq K_4\,k^3 l_1^k,
$$
where $K_4$ stands for  a positive constant
independent of $k$. This obviously implies that
 the series (\ref{sum1}) converges.

It is clear that the same arguments prove the convergence of
the first integral on the right-hand side of
inequality (\ref{me}), because the function
$\f(x,\mu)$ can not increase more rapidly
 than $\f'_{\mu}(x,\mu)$.

Consider now the second integral on the right-hand
side of (\ref{me}). Since
$$
{P'_{\mu}(x,y;\mu)}=\frac {\alpha x-y}{2\alpha}\,l'_{\mu},
$$
 we obviously have
$$
\int\limits_{{\cal D}_{**}}\!\!\!\int
 \int\limits_{{\lambda_*}
}^{{\lambda_{**}}}\left|
\sigma_0(\mu)\;
\frac {{\mu}}{\sqrt{1-\mu}}
{P'_{\mu}(x,y;\mu)}\;
\f(P(x,y;\mu),\mu)
\right|^2 \,d\mu \,dx \,dy
$$
$$
  \leq  K_{5}\int\limits_{{\cal D}_{**}}\!\!\!\int |x|
 \int\limits_{{\lambda_*}
}^{{\lambda_{**}}}   \left|
\f(P(x,y;\mu),\mu)
\right|^2 \,d\mu \,dx \,dy
$$
$$
\leq
K_{6}\int\limits_{{\cal D}_{**}}\!\!\!\int |x|
 \int\limits_{l_1}^{l_2}
\left|
\widetilde{\f}(\widetilde{P}(x,y;l),l)
\right|^2 \,dl \,dx \,dy,
$$
where $K_5$ and $K_{6}$ are some positive constants depending
on $\sigma_0$,  $\lambda_{*}$, and $\lambda_{**}$ only.
Using  the notation of  Section~\ref{sec:pervaja},
we shall now prove that the series
\begin{equation}\label{ryad}
\sum\limits_{k=k_0}^{+\infty}
\int\!\!\!\!\int\limits_{R_k} |x|
\int\limits_{l_1
}^{l_2}   \left|
\widetilde{\f}(\widetilde{P}(x,y;l),l)
\right|^2 \,dl \,dx \,dy
\end{equation}
converges, where $k_0$ is sufficiently large.
For any point $(x,y)\in R_k$, we  have
$$
\int\limits_{l_1}^{l_2}
\left|\widetilde{\f}(\widetilde{P}(x,y;l),l)\right|^2 \,dl =
\int\limits_{l_1}^{l_{k,k'_M,n}}
\left|\widetilde{\f}(\widetilde{P}(x,y;l),l)\right|^2 \,dl
+
\int\limits_{l_{k,k'_M,n}}^{l_{k,k'_M-1,n}}
\left|\widetilde{\f}(\widetilde{P}(x,y;l),l)\right|^2 \,dl
+...+
$$
$$
+
\int\limits_{l_{k,k'_m,n}}^{l_2}
\left|\widetilde{\f}(\widetilde{P}(x,y;l),l)\right|^2 \,dl
\leq
$$
$$
\leq C(0)\left\{
\int\limits_{l_{k,k'_M+1,n}}^{l_{k,k'_M,n}} l^{2k'_M}dl +
\int\limits_{l_{k,k'_M,n}}^{l_{k,k'_M-1,n}} l^{2k'_M-2}dl +...+
\int\limits_{l_{k,k'_m,n}}^{l_{k,k'_m-1,n}}l^{2k'_m-2} dl
\right\}\leq
$$
$$
\leq C(0)(l_2-l_1)
\left((l_{k,k'_M,n})^{2k'_M} +
(l_{k,k'_M-1,n})^{2k'_M-2} +...+
(l_{k,k'_m-1,n})^{2k'_m-2} \right)=
$$
$$=
 C(0)(l_2-l_1)
\left( P(x,y;{l_{k,k'_M,n}})^{-2} +
P(x,y;{l_{k,k'_M-1,n}})^{-2} +...+
P(x,y; {l_{k,k'_m-1,n}})^{-2} \right)\leq
$$
$$
\leq C(0)(l_2-l_1)(N(x,y;k)+1)\frac {4\alpha^2}{(\alpha x+y)^2}
\leq  \frac {K_{7}k}{x^2},
$$
where  $K_{7}$ does not depend on $k$, and therefore the series (\ref{ryad})
converges indeed.

Thus, Theorem \ref{teo:trrri} is proved for $n=1$.
To prove the validity of the bound  (\ref{oc}) for $n=2$,
note that
$$
U(x,y;t)=
$$
$$
=-\frac 1t \int\limits_{0}^{1}\sin(\nu t)\, \left(2\nu\;
\sigma_0(\nu^2)\right)'_{\nu}\;u(x,y;\nu^2)\,d\nu
-\frac 1t \int\limits_{0}^{1}\sin(\nu t)\, 2\nu\;
\sigma_0(\nu^2)\;u'_{\nu}(x,y;\nu^2) \,d\nu=
$$
$$
=-\frac 1{t^2}
\left(
\int\limits_{0}^{1}   \cos({\nu} t)\,
\left(2\nu\;\sigma_0(\nu^2)\right)''_{\nu\nu}\,u(x,y;\nu^2)\,d\nu+
2 \int\limits_{0}^{1}  \cos(\nu t)\,\left( 2\nu\;
\sigma_0(\nu^2)\right)'_{\nu}\,u'_{\nu}(x,y;\nu^2)\,d\nu\right. +
$$
$$
+\left. \int\limits_{0}^{1}\cos(\nu t)\, 2\nu\;
\sigma_0(\nu^2)\;u''_{\nu\nu}(x,y;\nu^2)\,d\nu\right).
$$
The existence of the first two integrals in the last
expression was proved above, and, to prove the convergence of
the third integral, with regard to the representation of
$u(x,y;\nu^2)$ in the form (\ref{13,5}),
 it is obviously sufficient to prove that  the series
\begin{equation}
\sum\limits_{k=k_0}^{+\infty}
\int\!\!\!\!\int\limits_{R_k}
 \int\limits_{\lambda_{*}}^{\lambda_{**}}\left|
\int\limits_{P(x,y;\mu)}^{Q(x,y;\mu)} \f''_{\mu\mu}(\widetilde x,\mu)
d\widetilde x\right|^2 d\mu \,dx\,dy
\end{equation}
converges.
The proof of the convergence of this series
repeats verbatim  the proof of the convergence of the series (\ref{sum}) with
the only difference that the bound
 (\ref{ocphi}) is used for  $n=2$.

The cases $n=3,\,4,...$ are treated in a similar way. This completes the proof of
Theorem \ref{teo:trrri}.
 \end{proof}

\section{Distribution of the energy of the initial state of the fluid}
It can readily be proved that   law of conservation of energy holds
 for  the solutions of  problem
(\ref{1})--(\ref{3}) (see, for example, \cite{Zel70}):
$$
{\cal E}(t,D):=\int\limits_D \left(\left|p_{y}\right|^2+
\left|p_{xt}\right|^2+\left|p_{yt}\right|^2\right)  dxdy={\rm const}.
$$
Problem (\ref{1})--(\ref{3}) was studied in \cite{Skaz85}  in the complement
$\R^2\setminus\overline{\Omega}$ to some convex bounded domain $\Omega$.
In this case, the energy of the initial perturbation
is redistributed as $t\to\infty$ in such a way that
the part of energy concentrated on  every compact set
${\cal A}\in\R^2\setminus\overline{\Omega}$
tends to zero, i.e. a scattering of energy occurs.
 In our case the following assertion holds.

 \begin{teo}\label{teo:chetyrrr}
Let $p_i,\,i=1,2$ satisfy the conditions of Theorem
$\ref{teo:trrri}$. In this case, for any
$\varepsilon>0$ and $\delta>0$, there is a
$T=T(\varepsilon,\delta)$ such that
\begin{equation}\label{enereps}
{\cal E}(t,D_{\varepsilon}):=\int
\!\!\!\int\limits_{D_{\varepsilon} }\left(
\left|p_{y}\right|^2+
\left|p_{xt}\right|^2
+\left|p_{yt}\right|^2\right)  \,dx\,dy<\delta
\end{equation}
for the corresponding
solution $p=p(x,y;t)$  of problem {\rm (\ref{1})--(\ref{3})}
for all $t>T$,
where $D_{\varepsilon}$
stands for the set $D\cap\{x>\varepsilon\}\cap\{y<1-\varepsilon\}$.
 \end{teo}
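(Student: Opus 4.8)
The plan is to decompose $p$ into finitely many spectral pieces, to localise to $D_\varepsilon$ — the region where the amplitudes of these pieces are, by construction, smooth in all three of the variables $x,y,\lambda$ — and then to run the same integration by parts in the spectral variable as in the proof of Theorem~\ref{teo:trrri}, thereby establishing for each of the energy densities $|p_y|^2,|p_{xt}|^2,|p_{yt}|^2$ on $D_\varepsilon$ the $t^{-n}$ decay that Theorem~\ref{teo:trrri} establishes for $\|p\|_{L_2(D)}$.

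First I would reduce the statement. Writing $\sigma_i=\sigma_i'+\sigma_i''$ with $\supp\sigma_i'\subset(0,(1+\alpha^2)^{-1})$ and $\supp\sigma_i''\subset((1+\alpha^2)^{-1},1)$, the solution $p$ of the form (\ref{27}) becomes a sum of at most four terms, each of the shape
$$
q(x,y;t)=\int\limits_{0}^{1}\kappa(\sqrt{\lambda}\,t)\,\widetilde\sigma(\lambda)\,\widetilde w(x,y;\lambda)\,d\lambda,
$$
where $\kappa(s)$ is $\cos s$ or $s^{-1}\sin s$, where $\widetilde\sigma\in C_0^\infty$ has support in one of the two open intervals above, and where $\widetilde w=u(x,y;\theta_1;\lambda)$ (with $\theta_1\in C_0^\infty[0,1]$) on $(0,(1+\alpha^2)^{-1})$, respectively $\widetilde w=v(x,y;\theta_2;\lambda)$ (with $\theta_2\in C_0^\infty[0,1/\alpha]$) on $((1+\alpha^2)^{-1},1)$. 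By the triangle inequality it suffices to estimate $\int\!\!\!\int_{D_\varepsilon}(|q_y|^2+|q_{xt}|^2+|q_{yt}|^2)$ for one such $q$ at a time, and since $v$ is constructed from $\theta_2$ in exactly the same manner as $u$ from $\theta_1$ (with the roles of the two vertices $O$ and $B$ interchanged), it is enough to treat the case $\widetilde w=u$ with $\supp\widetilde\sigma\subset[\lambda_*,\lambda_{**}]\subset(0,(1+\alpha^2)^{-1})$.

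Next comes the localisation, which is the heart of the matter. Fix $\varepsilon>0$. Then $\overline{D_\varepsilon}=\overline{D\cap\{x>\varepsilon\}\cap\{y<1-\varepsilon\}}$ is a compact subset of $\overline D$ not containing the vertex $O(0,0)$ — and, in the case of the $v$-part, not containing the vertex $B(1/\alpha,1)$ either, which is exactly the reason the condition $y<1-\varepsilon$ is imposed. Since in addition $[\lambda_*,\lambda_{**}]$ is a compact subinterval of $(0,(1+\alpha^2)^{-1})$, the smoothness property recorded in the proof of Theorem~\ref{teo:trrri} applies: for $\theta_1\in C_0^\infty[0,1]$ the function $u(x,y;\theta_1;\lambda)$ is $C^\infty$ in $(x,y,\lambda)$ and extends, together with all its derivatives, to all of $\overline D\times[0,(1+\alpha^2)^{-1}]$ with the edge $\{x=y=0\}$ removed. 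Hence
$$
M_n:=\max_{0\le j\le n}\;\sup_{\overline{D_\varepsilon}\times[\lambda_*,\lambda_{**}]}\bigl(|\partial_\lambda^j u|+|\partial_\lambda^j u_x|+|\partial_\lambda^j u_y|\bigr)<\infty
$$
for every $n\ge 0$. (If one prefers to avoid pointwise estimates, one may instead use only the $L_2$-bounds $\int_{\lambda_*}^{\lambda_{**}}\|\partial_\lambda^j u_x(\cdot;\lambda)\|_{L_2(D_\varepsilon)}^2\,d\lambda<\infty$ and their analogues for $u$ and $u_y$, which follow by differentiating in $\lambda$ bounds of the type (\ref{11m}) and using that only finitely many of the cells $T_\lambda^i,P_\lambda^i$ meet $D_\varepsilon$, uniformly for $\lambda\in[\lambda_*,\lambda_{**}]$.) I expect this joint-smoothness fact — the statement that the finite chain of Goursat/Darboux-type problems of \cite{Tro94,Tro98,Tro99} arising on the finitely many cells meeting $D_\varepsilon$ produces, out of $C_0^\infty$ data, a solution smooth up to $\overline D\setminus\{O\}$ — to be the main obstacle; everything after it is mechanical.

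Finally, the integration by parts. Passing to $\nu=\sqrt\lambda$ one sees that each of $q_y$, $q_{xt}$, $q_{yt}$ has the form
$$
\int\limits_{0}^{\nu_*}\kappa_0(\nu t)\,\chi(\nu)\,h(x,y;\nu^2)\,d\nu,\qquad \nu_*=(1+\alpha^2)^{-1/2},
$$
with $\kappa_0\in\{\cos,\sin\}$, $\chi\in C_0^\infty(0,\nu_*)$ and $h\in\{u,u_x,u_y\}$. Since $\chi$ vanishes near the endpoints, integrating by parts $n$ times in $\nu$ produces no boundary terms and gives $\pm t^{-n}\int_0^{\nu_*}\kappa_n(\nu t)\,\partial_\nu^n[\chi(\nu)h(x,y;\nu^2)]\,d\nu$ with $\kappa_n\in\{\cos,\sin\}$; and $\partial_\nu^n[\chi(\nu)h(x,y;\nu^2)]$ is a finite linear combination, with $C_0^\infty(0,\nu_*)$ coefficients in $\nu$, of the functions $\partial_\lambda^j h(x,y;\nu^2)$, $0\le j\le n$. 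Combining this with the bound for $M_n$, $|q_y(x,y;t)|+|q_{xt}(x,y;t)|+|q_{yt}(x,y;t)|\le C_n M_n\,t^{-n}$ for all $(x,y)\in D_\varepsilon$ and all $t>0$, with $C_n$ independent of $t$ and of $(x,y)$. Summing over the finitely many pieces $q$ and squaring,
$$
\mathcal{E}(t,D_\varepsilon)=\int\!\!\!\int\limits_{D_\varepsilon}\bigl(|p_y|^2+|p_{xt}|^2+|p_{yt}|^2\bigr)\,dx\,dy\le\frac{C\,|D_\varepsilon|}{t^{2n}},
$$
which already for $n=1$ tends to $0$ as $t\to\infty$. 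Given $\delta>0$ it then suffices to take $T=T(\varepsilon,\delta)$ with $C|D_\varepsilon|/T^2<\delta$, and the theorem follows.
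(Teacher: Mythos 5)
Your proposal is correct and follows essentially the same route as the paper: reduce to a single spectral piece with $\supp\sigma_0\subset[\lambda_*,\lambda_{**}]$, use the fact that $u(x,y;\lambda)$ together with all derivatives extends continuously to $\overline{D\cap\{x>\varepsilon\}}\times[\lambda_*,\lambda_{**}]$ (i.e.\ away from the relevant vertex), and integrate by parts in the spectral variable to extract decay in $t$. The only difference is cosmetic: the paper integrates by parts once and bounds the resulting integrands in $L_2$ to get ${\cal E}(t,D_\varepsilon)\leq K_\varepsilon/t^2$, while you iterate $n$ times with pointwise sup bounds to get $t^{-2n}$ — a mild strengthening, not a different argument.
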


 \begin{proof}
As in the proof of Theorem \ref{teo:trrri},
assume for simplicity that
$\te^{(1)}_1\equiv 0$,
$\te^{(0)}_2\equiv 0,\,\te^{(1)}_2\equiv 0$,
 $\sigma_1(\lambda)\equiv 0$, and
$\supp\sigma_0\subset [\lambda_*,\,\lambda_{**}]
\subset (0,(1+\alpha^2)^{-1})$. In this case
 the solution of problem
(\ref{1})--(\ref{3}) is the function
$$
p(x,y;t)=
\int\limits_{0}^{1}   \cos(\sqrt{\lambda}\; t)\,
\sigma_0(\lambda)\,u(x,y;\lambda)\,d\lambda
=
$$
$$
=-\frac 1t \int\limits_{0}^{1}   \sin(\sqrt{\lambda}\; t)\,
\left(2\sqrt{\lambda} \;\sigma_0(\lambda)
\right)'_{\lambda}
\,u(x,y;\lambda)\,d\lambda
-\frac 1t \int\limits_{0}^{1}   \sin(\sqrt{\lambda}\; t)\,
\sigma_0(\lambda)\,u'_{\lambda}(x,y;\lambda)\;2
\sqrt{\lambda}\,d\lambda
$$
(see (\ref{30})).
As was noted above, under the assumptions of the theorem,
$u(x,y;\lambda)$ is infinitely differentiable with respect to
 $x,y,\lambda$ everywhere inside the prisme
$D\times (0,(1+\alpha^2)^{-1})$ and can be continuously  extended
 together with all its derivatives to
the entire surface of the prism
$$
\left(D\cap\{x>\varepsilon\}\right)\times (0,(1+\alpha^2)^{-1}),
\quad \varepsilon>0.
$$
We have
\begin{eqnarray*}
p_{xt}&=&
-\frac 1t \int\limits_{0}^{1}
 \sqrt{\lambda} \; \sin(\sqrt{\lambda}\; t)\,
\left(2\sqrt{\lambda} \;\sigma_0(\lambda)\right)'_{\lambda}
\,u'_x(x,y;\lambda) \,d\lambda-\\
&&
-\frac 1t \int\limits_{0}^{1}   \sin(\sqrt{\lambda}\; t)\,
\sigma_0(\lambda)\,u''_{x\lambda}(x,y;\lambda)\;2
{\lambda}\,d\lambda,
\end{eqnarray*}
and therefore
$$
\int\!\!\!\!\!\!\!\!\int\limits_{D\cap\{x>\varepsilon\} }
\left|p_{xt}\right|^2  \,dx\,dy
\leq\frac 2{t^2}
\int\!\!\!\!\!\!\!\!\int\limits_{D\cap\{x>\varepsilon\} }
\left| \int\limits_{0}^{1}
 \sqrt{\lambda}  \;\sin(\sqrt{\lambda}\; t)\,
\left(2\sqrt{\lambda} \;\sigma_0(\lambda)
\right)'_{\lambda}
\,u'_x(x,y;\lambda)\, d\lambda
\right|^2 \,dx\,dy+
$$
$$
+\frac 2{t^2}
\int\!\!\!\!\!\!\!\!\int\limits_{D\cap\{x>\varepsilon\} }
\left|
\int\limits_{0}^{1}   \sin(\sqrt{\lambda}\; t)\,
\sigma_0(\lambda)\,u''_{x\lambda}(x,y;\lambda)\;2
{\lambda}\,d\lambda
\right|^2 \,dx\,dy\leq
$$
$$
\leq\frac 2{t^2}
\int\!\!\!\!\!\!\!\!\int\limits_{D\cap\{x>\varepsilon\} }
\int\limits_{0}^{1}
\left|  \sqrt{\lambda}\;
\left(2\sqrt{\lambda}\; \sigma_0(\lambda)
\right)'_{\lambda}
\,u'_x(x,y;\lambda)\right|^2 d\lambda\,\,
 dx\,dy+
$$
$$
+\frac 2{t^2}
\int\!\!\!\!\!\!\!\!\int\limits_{D\cap\{x>\varepsilon\} }
 \int\limits_{0}^{1} \left|
\sigma_0(\lambda)\,u''_{x\lambda}(x,y;\lambda)\;2
{\lambda}\right|^2 d\lambda\,\,
dx\,dy\leq\frac {K_{\varepsilon}}{t^2}\to 0 \quad (t\to \infty),
$$
where the positive constant  $K_{\varepsilon}$ does not depend on $t$.
The other summands in (\ref{enereps}) can be estimated in a similar
way.
 \end{proof}

It follows from the last theorem that, in the course of time,   the total
energy of the initial state of the fluid turns out to be
 almost completely concentrated
in arbitrary small neighborhoods of the vertices
 $O$ and  $B$ of the domain $D$. It is clear here that, if we have
$\te_2^{(i)}\equiv 0$, $i=0,1$, or
${\rm supp}\,\sigma_i\in (0,(1+\alpha^2)^{-1})$, $i=0,1$,
then the energy is accumulated  in a neighborhood of
the point $O$ only. Respectively, if
$\te_1^{(i)}\equiv 0,$ $i=0,1,$ or
${\rm supp}\,\sigma_i\in ((1+\alpha^2)^{-1},1)$, $i=0,1$,
then the energy is accumulated in a neighborhood of
the point  $B$. It is clear that
this picture occurs due to  the fact that
the Poincar\'e-Sobolev equation describes the behavior of
an ideal fluid, whereas,  in the case of a real fluid, one should
consider the corresponding nonlinear systems of equations.

\medskip
In conclusion we note that the approach to the
construction of exact solutions of  problem (\ref{1})--(\ref{3})
which is suggested in the present paper
can be used for a rather wide class of domains with
angular points. For example, let $D$ be
a ``curvilinear triangle'' whose sides
$OA$ and $OB$ are some smooth curves
intersecting at the point $O$ and forming
 a nonzero angle at this point. If, for any
$\lambda \in (\lambda',\lambda'')$ the rays of characteristic
directions whose reflection low at the boundary is described in Section \ref{sec:pervaja}
hide into the angle with  the vertex $O$, then, on this
interval $(\lambda',\lambda'')$,
one can construct differential solutions of the spectral
equation for the operator $A$
 similarly to the rule used in Section \ref{sec:pervaja}, and  the
solutions of the nonstationary problem (\ref{1})--(\ref{3}) corresponding to these
differential solutions. It is clear that
the behavior of these  solutions   as $t\to\infty$
is similar to the behavior  of the solutions  described above.

\end{document}